\newtheorem{theorem}{Theorem}
\newtheorem{lemma}[theorem]{Lemma}
\newtheorem{fact}[theorem]{Fact}
\newtheorem{observation}[theorem]{Observation}
\newtheorem{question}[theorem]{Question}
\newtheorem{corollary}[theorem]{Corollary}
\crefname{claim}{Claim}{Claims}
\title{The Streaming $k$-Mismatch Problem: Tradeoffs between Space and Total Time\thanks{This work was supported in part by ISF grants no.\ 1278/16 and 1926/19, by a BSF grant no.\ 2018364, and by an ERC grant MPM under the EU's Horizon 2020 Research and Innovation Programme (grant no. 683064).}}
\newcommand{\ceil}[1]{\left\lceil{#1}\right\rceil}
\newcommand{\floor}[1]{\left\lfloor{#1}\right\rfloor}
\newcommand{\modulo}{\operatorname{mod}}
\newcommand{\naive}{na\"{\i}ve }		
\newcommand{\HAM}{\mathsf{Ham}}
\newcommand{\ALG}{\mathsf{ALG}}
\newcommand{\num}{t}
\newcommand{\diam}{\mathsf{diam}}
\newcommand{\Z}{\mathbb{Z}}
\newcommand{\I}{\Phi}
\newcommand{\F}{\mathcal{F}}
\newcommand{\G}{\mathcal{G}}
\newcommand{\X}{\mathcal{X}}
\newcommand{\Y}{\mathcal{Y}}
\renewcommand{\H}{\mathcal{H}}
\newcommand{\Oh}{O}
\newcommand{\Ohtilde}{\tilde{\Oh}}
\newcommand{\supp}{\mathrm{supp}}
\newcommand{\eps}{\varepsilon}
\newcommand{\dd}{\mathinner{.\,.}}
\begin{document}
\allowdisplaybreaks

\setenumerate{itemsep=0ex, parsep=1pt, topsep=1pt}

\renewcommand\Affilfont{\normalsize}

\author[]{Shay Golan}
\author[]{Tomasz Kociumaka}
\author[]{Tsvi Kopelowitz}
\author[]{Ely Porat}

\affil[]{Department of Computer Science, Bar-Ilan University, Ramat Gan, Israel}
\affil[]{\texttt{ golansh1@cs.biu.ac.il, kociumaka@mimuw.edu.pl,  kopelot@gmail.com, porately@cs.biu.ac.il}}
\date{\vspace{-1.5cm}}

\maketitle

\begin{abstract}
We revisit the $k$-mismatch problem in the streaming model on a pattern of length $m$ and a streaming text of length $n$, both over a size-$\sigma$ alphabet.
The current state-of-the-art algorithm for the streaming $k$-mismatch problem, by Clifford et al. [SODA 2019], uses $\Ohtilde(k)$ space and $\Ohtilde\big(\sqrt k\big)$ worst-case time per character. The space complexity is known to be (unconditionally) optimal, and the worst-case time per character matches a conditional lower bound. However, there is a gap between the total time cost of the algorithm, which is $\Ohtilde(n\sqrt k)$, and the fastest known offline algorithm, which costs $\Ohtilde\big(n + \min\big(\frac{nk}{\sqrt m},\sigma n\big)\big)$ time.
Moreover, it is not known whether improvements over the $\Ohtilde(n\sqrt k)$ total time are possible when using more than $O(k)$ space.

We address these gaps by designing a randomized streaming algorithm for the $k$-mismatch problem that,
given an integer parameter $k\le s \le m$, uses $\tilde O(s)$ space and costs $\tilde O\big(n+\min\big(\frac {nk^2}m,\frac{nk}{\sqrt s},\frac{\sigma nm}s\big)\big)$ total time.
For $s=m$, the total runtime becomes $\Ohtilde\big(n + \min\big(\frac{nk}{\sqrt m},\sigma n\big)\big)$, which matches the time cost of the fastest offline algorithm.
Moreover, the worst-case time cost per character is still $\tilde O\big(\sqrt k\big)$.
\end{abstract}

\section{Introduction}

In the fundamental Hamming distance problem, given two same-length strings $X$ and $Y$, the goal is to compute $\HAM(X,Y)$, which is the number of aligned mismatches between $X$ and $Y$.
In the pattern matching version of the Hamming distance problem, the input is a pattern $P$ of length $m$ and a text $T$ of length $n$, both over a size-$\sigma$ alphabet,
and the goal is to compute the Hamming distance
between $P$ and every length-$m$ substring of $T$.
In this paper, we focus on a well studied generalization known as the \emph{$k$-mismatch} problem~\cite{Abrahamson87,ALP04,CGKKP20,CFPSS16,CH02,FP74,GG87,GU18,LV86,LV89,SV96}, which is the ``fixed-threshold'' version of the pattern matching Hamming distance problem: for a given parameter $k$, for each length-$m$ substring $S$ of $T$, if $\HAM(P,S) \le k$, then compute $\HAM(P,S)$, and otherwise, report that $\HAM(P,S) > k$.
Currently, the state-of-the-art (offline) algorithms for the $k$-mismatch problem are: (1) the algorithm of Fischer and Paterson~\cite{FP74}, whose runtime is $\Ohtilde (\sigma n)$, and (2) the algorithm of Gawrychowski and Uznański~\cite{GU18}, whose runtime is $\Ohtilde\big(n+ \frac{nk}{\sqrt{m}}\big)$; see also~\cite{CGKKP20}.

\paragraph{The online and streaming models.}
The growing size of strings to be processed, often exceeding the available memory limits, motivated
the study of pattern matching in the streaming model, where the characters of $T$ arrive in a stream one at a time,
and every occurrence of $P$ needs to be identified as soon as the last character of the occurrence arrives~\cite{BG14,CGKKP20,CFPSS15,CFPSS16, CKP19, GS19,GKP18,GP17, RS17,PP09,SSU19}.
In the \emph{streaming $k$-mismatch problem}, the goal is to compute the Hamming distance between $P$ and the current length-$m$ suffix of $T$ after each new character arrives, unless the Hamming distance is larger than $k$ (which the algorithm reports in this case).
Algorithms in the streaming model are typically required to use space of size sublinear in $m$.
A closely related model is the \emph{online model}, where the space usage of the algorithm is no longer explicitly limited.

Porat and Porat~\cite{PP09} introduced the first streaming $k$-mismatch algorithm
using $\Ohtilde(k^2)$ time per character and $\Ohtilde(k^3)$ space. Subsequent improvements~\cite{CFPSS16,GKP18}
culminated in an algorithm by Clifford et al.~\cite{CKP19}, which solves the streaming $k$-mismatch problem in $\Ohtilde\big(\sqrt{k}\big)$ time per character using $\Ohtilde(k)$ space.
The total time cost of $\Ohtilde\big(n\sqrt k\big)$ matches the time cost of the offline algorithm of Amir et al.~\cite{ALP04}, and the worst-case per-character running time matches a recent lower bound (valid for $\sigma=\Omega\big(\sqrt{k}\big)$ even with unlimited space usage) by Gawrychowski and Uznański~\cite{GU19}, conditioned on the combinatorial Boolean matrix multiplication conjecture.
However, the $\Ohtilde\big(n+ \frac{nk}{\sqrt m}\big)$  total time cost of the offline algorithm of Gawrychowski and Uznański~\cite{GU18} is smaller, and the $\Ohtilde(\sigma n)$  total time cost of the offline algorithm of
Fischer and Paterson~\cite{FP74} is smaller for small $\sigma$.

In the online model, where $O(m)$ space usage is allowed,
the fastest algorithms follow from a generic reduction by Clifford et al.~\cite{CEPP11},
which shows that if the offline $k$-mismatch problem can be solved in $\Oh(n\cdot t(m,k))$ time,
then the online $k$-mismatch problem can be solved in $\Oh(n\sum_{i=0}^{\ceil{\log m}}t(2^i,k))$ time.
In particular, this yields online algorithms with a total runtime of $\Ohtilde\big(n\sqrt{k}\big)$ and $\Ohtilde(n\sigma)$.
Nevertheless, this approach cannot benefit from the state-of-the-art the offline algorithm of Gawrychowski and Uznański~\cite{GU18} since the running time of this algorithm degrades as $m$ decreases.
Thus, a natural question arises:
\begin{question}\label{que:total-time}
Is there an online/streaming algorithm for the $k$-mismatch problem whose total time cost is $\Ohtilde\big(n+\min\big(\frac{nk}{\sqrt m},\sigma n\big)\big)$?
\end{question}

\paragraph{Space usage.}
It is straightforward to show that any streaming algorithm for the $k$-mismatch problem must use $\Omega(k)$ space~\cite{CKP19}.
Thus, the space usage of the algorithm of Clifford et al.~\cite{CKP19} is optimal.
Remarkably, 
we are unaware of any other tradeoffs between (sublinear) space usage and runtime for the $k$-mismatch problem. This leads to the following natural question.

\begin{question}\label{que:space-tradeoff}
Is there a time-space tradeoff algorithm for the $k$-mismatch problem, using $s \ge \Omega(k)$ space?
\end{question}

\paragraph{Our results.}
We address both \cref{que:total-time} and \cref{que:space-tradeoff} by proving the following theorem.

\begin{theorem}\label{thm:kmm-total}
	There exists a randomized streaming algorithm for the $k$-mismatch problem that, given an integer parameter $k\le s \le m$, costs $\tilde O\big(n+\min\big(\frac {nk^2}m,\frac{nk}{\sqrt s},\frac{\sigma nm}s\big)\big)$ total time and uses $\tilde O(s)$ space. Moreover, the worst-case time cost per character is $\tilde O(\sqrt k)$. The algorithm is correct with high probability\footnote{An event $\mathcal E$ happens with high probability if  $\Pr[\mathcal E]\ge 1-n^{-c}$ for a constant parameter $c\ge 1$.}.
\end{theorem}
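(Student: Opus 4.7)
The plan is to derive the three terms in the minimum from three separate algorithmic regimes, each fitting within the $\tilde O(s)$ space budget, and then combine them with a scheduler that preserves the $\tilde O(\sqrt k)$ worst-case time per character.

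For the $\tilde O(nk/\sqrt s)$ bound, I would generalize the streaming algorithm of Clifford et al., which maintains $\tilde O(\sqrt k)$ pattern sketches and spends $\tilde O(\sqrt k)$ amortized work per character. Storing $\tilde O(\sqrt s)$ sketches instead reduces the per-candidate verification time to $\tilde O(k/\sqrt s)$, while the number of surviving candidates per character remains $\tilde O(1)$. For the $\tilde O(\sigma nm/s)$ bound, I would buffer the stream in blocks of $\Theta(s)$ characters and, upon completion of each block, run $\sigma$ character-indicator FFT convolutions with the pattern: this yields the mismatch counts for all $\Theta(s)$ alignments ending in the block in time $\tilde O(\sigma m)$ per block, summing to $\tilde O(\sigma nm/s)$. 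For the $\tilde O(nk^2/m)$ bound, I would streamify the Landau--Vishkin-type offline algorithm, exploiting the fact that the set of alignments with Hamming distance at most $k$ has size $O(nk/m)$ and admits a structural decomposition that enables $\tilde O(k)$-time verification per candidate.

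To combine the regimes, I would run all three in parallel (each using $\tilde O(s)$ space) and emit the first correct answer produced for each alignment. To enforce the $\tilde O(\sqrt k)$ worst-case per-character cost, I would apply the delayed-emission scheduler of Clifford et al.: answers are delayed by $\Theta(\sqrt k)$ positions so that bursty work can be spread uniformly over this latency window. In the parameter range where the FFT regime is the winner, $\sigma m/s \le \sqrt k$, so the per-block FFT work of $\tilde O(\sigma m)$ splits into $s$ increments of $\tilde O(\sqrt k)$ each, and an analogous balance holds for the other regimes.

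The chief obstacle is organizing the interleaving of these three regimes with streaming character arrivals: guaranteeing that the block-level FFT can be computed incrementally in $\tilde O(\sqrt k)$-sized chunks that remain correct despite newly arriving characters, and that the streaming Landau--Vishkin-style regime can be made worst-case (not merely amortized) $\tilde O(\sqrt k)$ per character. A secondary difficulty is ensuring that the union of all auxiliary data structures — sketch tables, FFT buffers, and the delayed-emission queue — still fits within $\tilde O(s)$ space.
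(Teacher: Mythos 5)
Your plan diverges fundamentally from the paper's proof, and it contains a gap that I do not believe can be repaired along the lines you sketch. The central problem is the $\tilde O(\sigma nm/s)$ regime: you propose to buffer a block of $\Theta(s)$ text characters and then cross-correlate (via $\sigma$ character-indicator FFTs) the block against the pattern $P$. But running those FFTs requires random access to the pattern's characteristic functions, which together have $\Theta(m)$ non-zero entries. When $s<m$ — which the theorem explicitly allows, since the only constraint is $k\le s\le m$ — you cannot store the pattern within the $\tilde O(s)$ space budget, and the FFT working buffers themselves would also need $\Theta(m)$ cells. Nothing in your plan addresses how the FFT regime is supposed to access $P$ under a sublinear space cap.

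The paper escapes exactly this obstacle through a case split that is entirely absent from your proposal: it distinguishes whether $P$ has an approximate period $\rho\le k$ with $d=O(k)$ mismatches. In the \emph{aperiodic} case, there are $\Omega(k)$ positions between any two near-occurrences of $P$ (\cref{lem:head_diff}), so a filter-plus-naive-verification scheme (\cref{lem:kmm-np-all}) runs in $\tilde O(n)$ total time and $\tilde O(k)\le\tilde O(s)$ space; no FFT is needed and all three target bounds are dominated by $\tilde O(n)$. In the \emph{periodic} case, the paper does not convolve characteristic functions directly; it convolves their \emph{backward differences} $\Delta_\rho[P_c^R]$ and $\Delta_\rho[T_c]$, which by \cref{obs:period-length-small-norm} have total support only $O(k)\le O(s)$. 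Sparsity is what lets the pattern-side data structure fit into $\tilde O(s)$ space and what simultaneously makes the pairwise-enumeration branch of \cref{lem:offline-convolution} cheap. All three terms of the minimum then fall out of a \emph{single} batched convolution-summation routine (\cref{lem:online-batch-algorithm}): $\|\F\|\cdot\|\G\|=O(k^2)$ per length-$\frac32 m$ block gives $nk^2/m$; $(\|\F\|+\|\G\|)\sqrt{s}=O(k\sqrt s)$ per batch gives $nk/\sqrt s$; and the FFT-on-windows branch gives $\sigma m^2/s$ per block, i.e.\ $\sigma nm/s$ overall. You do not need three separate algorithms, and indeed running all three in parallel would make the worst-case per-character cost the \emph{sum} of the three, which is not bounded by $\tilde O(\sqrt k)$ in general (you would at minimum have to precompute the winning regime at initialization rather than race them).

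Two smaller issues. First, your claimed bound that ``the set of alignments with Hamming distance at most $k$ has size $O(nk/m)$'' is false in general — take $P$ and $T$ both constant, where every alignment has distance $0$ — so the ``streamified Landau--Vishkin'' route (which is in any case an edit-distance algorithm, not a Hamming-distance one) does not rest on a true premise; the paper obtains $nk^2/m$ from the $\|\F\|\cdot\|\G\|$ bound on sparse convolution, not from counting occurrences. Second, even the sketch-generalization you propose for $nk/\sqrt s$ is not how the paper proceeds: that term again comes from the convolution-summation machinery. To salvage your approach you would essentially have to reinvent the periodic/aperiodic dichotomy and the backward-difference sparsification, at which point you would be reconstructing the paper's proof.
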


\cref{thm:kmm-total} answers \cref{que:space-tradeoff} directly. However, for \cref{que:total-time}, \cref{thm:kmm-total} only addresses the online setting, where $s=m$ can be set: since $k<\sqrt m$ yields $n > \frac{nk}{\sqrt m} > \frac {nk^2}m$, the total time cost is  $\tilde O\big(n+\min\big(\frac {nk^2}m,\frac{nk}{\sqrt m},\frac{\sigma nm}m\big)\big) = \tilde O\big(n+\min\big(\frac{nk}{\sqrt m},\sigma n\big)\big)$.
However, \cref{que:total-time} remains open for the streaming model.

Another natural research direction is to extend \cref{thm:kmm-total} so that the pattern $P$ 
could also be processed in a streaming fashion using $\Ohtilde(s)$ space, $\Ohtilde(\sqrt{k})$ time per character,
and $\tilde O\big(m+\min\big(k^2,\frac{mk}{\sqrt s},\frac{\sigma m^2}s\big)\big)$  time in total.
To the best of our knowledge, among the existing streaming $k$-mismatch algorithms, 
only that of Clifford et al.~\cite{CKP19} is accompanied with an efficient streaming procedure
for preprocessing the pattern.

\section{Algorithmic Overview and Organization}\label{sec:preliminaries}

A string $S$ of length $|S|=n$ is a sequence of characters $S[0]S[1]\cdots S[n-1]$ over an alphabet $\Sigma$.
A \emph{substring} of $S$ is denoted by $S[i\dd j]=S[i]S[i+1]\cdots S[j]$ for $0\leq i \leq j < n$. If $i=0$, the substring is called a \emph{prefix} of $S$, and if $j=n-1$, the substring is called a \emph{suffix} of $S$.
For two strings $S$ and $S'$ of the same length $|S|=n=|S'|$, we denote by $\HAM(S,S')$ the Hamming distance of $S$ and~$S'$,
that is, $\HAM(S,S') = |\{0\le i \le n-1 : S[i] \ne S'[i]\}|$.
An integer $\rho$ is a $d$-period of a string $S$ if $\HAM(S[0\dd  n-\rho-1], S[\rho \dd  n-1])\le d$.

\subsection{Overview}\label{sec:overview}
To prove \cref{thm:kmm-total}, we consider two cases, depending on whether or not there exists an integer $\rho\le k$ that is a $d$-period of $P$ for some $d=\Oh(k)$.
If such a $\rho$ exists, then we say that $P$ is \emph{periodic}\footnote{The classic notion of periodicity is usually much simpler than the one we define here. However, since in this paper we do not use the classic notion of periodicity, we slightly abuse the terminology.}, and otherwise $P$ is said to be \emph{aperiodic}.

\paragraph{Tail partitioning.}
In both cases of whether $P$ is periodic or not, our algorithms use the well-known \emph{tail partitioning} technique~\cite{CFPSS15,CFPSS16,CKP19,CS10,GP17}, which decomposes $P$  into two substrings: a suffix $P_{tail}$ and the complementary prefix $P_{head}$ of length $m-|P_{tail}|$.
Accordingly, the algorithm has two components.
The first component computes the Hamming distance of $P_{head}$ and every length-$|P_{head}|$ substring of $T$ with some delay: the reporting of $\HAM(P_{head},\allowbreak T[i-|P|+1\dd i-|P_{tail}|])$ is required to be completed before the arrival of $T[i]$.
The second component computes the Hamming distance of $P_{tail}$ and carefully selected length-$|P_{tail}|$ substrings of $T$.
The decision mechanism for selecting substrings for the second component is required to guarantee that whenever $\HAM(P_{head},T[i-|P|+1\dd i-|P_{tail}|])\le k$:
if $\HAM(P_{tail},T[i-|P_{tail}|+1\dd i])\le k$ then the second component computes $\HAM(P_{tail},\allowbreak T[i-|P_{tail}|+1\dd i])$; otherwise, the second component reports $\HAM(P_{tail},T[i-|P_{tail}|+1\dd i])>k$.
The second component has no delay.

Notice that if either $\HAM(P_{head},T[i-|P|\dd i-|P_{tail}|])>k$, which is detected by the first component,
or $\HAM(P_{tail},T[i-|P_{tail}|+1\dd i]) >k$, which is detected by the second component, then it must be that
$\HAM(P,T[i-|P|\dd i])> k$.
Otherwise, $\HAM(P,T[i-|P|\dd i])$ is computed by summing $\HAM(P_{head},T[i-|P|\dd i-|P_{tail}|])$ 
and  $\HAM(P_{tail},T[i-|P_{tail}|+1\dd i])$. 
In either case, the information is available for the algorithm right after $T[i]$ arrives.

Thus, our algorithm has four main components, depending on whether $P$ is periodic or not, and depending on the head or tail case of the tail partitioning technique.

\paragraph{The aperiodic case.}
The algorithms for the aperiodic case are a combination of straightforward modifications of previous work together with the \naive algorithm; the details are given in \cref{sec:aperiodic-pattern}. Nevertheless, we provide an overview below. In this case, $|P_{tail}|=2k$.

The algorithm for $P_{head}$ in the aperiodic case is a slight modification of an algorithm designed by Golan et al.~\cite{GKP18}, which reduces the streaming $k$-mismatch problem to the problem of finding occurrences of multiple patterns in multiple text-streams.

The algorithm for $P_{tail}$ in the aperiodic case is the \naive algorithm of comparing all aligned pairs of characters. While in general the \naive algorithm could cost $O(|P_{tail}|)$ time per character, in our setting the algorithm uses the output of the algorithm on $P_{head}$ as a filter, and so the algorithm computes $\HAM(P_{tail}, T[i-|P_{tail}|+1\dd i])$ only if $\HAM(P_{head},\allowbreak T[i-|P|+1\dd i-|P_{tail}|]) \le k$.
Since $P$ is aperiodic, we are able to show that occurrences of $P_{head}$ are distant enough so that the \naive algorithm for $P_{tail}$ costs $\Ohtilde(1)$ worst-case time per character.
In order for the filter to be effective, instead of guaranteeing that the algorithm for computing $\HAM(P_{head},T[i-|P|+1\dd i-|P_{tail}|])$ is completed before $T[i]$ arrives, we refine the tail partitioning technique so that the computation of $\HAM(P_{head},T[i-|P|+1\dd i-|P_{tail}|])$ completes before $T[i - \frac12|P_{tail}|]$ arrives, and if the \naive algorithm should be used, the execution takes place through the arrivals of the subsequent $\frac12|P_{tail}|$ characters $T[i - \frac12|P_{tail}|+1\dd i]$. The effects of this refinement on the runtime is only by constant multiplicative factors.

\paragraph{The periodic case.}
We begin by first assuming that $P$ and $T$ have a common $O(k)$-period $\rho \le k$, and that $n\le\frac32m$.
In this case, we represent the strings as characteristic functions (one function for each character in $\Sigma$).
Since both $P$ and $T$ are assumed to be periodic, each characteristic function, when treated as a string, is also periodic.
Next, we use the notion of \emph{backward differences}: for any function $f:\Z \rightarrow \Z$, the \emph{backward difference} of $f$ due to $\rho$ is $\Delta_\rho[f](i)= f(i)-f(i-\rho)$.
Clifford et al.~\cite{CKP19} showed that the Hamming distance of two strings can be derived from a summation of convolutions of backward differences due to $\rho$ of characteristic functions; see \cref{sec:hd-to-csp}.

In the case of $P_{head}$, a delay of up to $2s$ characters is allowed.
To solve this case, we define the problem of computing the convolutions of the backward differences in batches; the details for this case are given in~\cref{sec:periodic-everything-with-delay}.
Our solution uses an offline algorithm for  computing the convolutions described in \cref{sec:csp}. 
In the case of $P_{tail}$, we use a solution for the online version of computing the convolutions of the backward differences, which is adapted from Clifford et al.~\cite{CEPP11}; the details are given in \cref{sec:periodic-everything-without-delay}.
In both cases, since we assume that $P$ and $T$ are periodic, our algorithms leverage the fact that the backward differences of the characteristic functions have a small number of non-zero entries.
This lets the algorithms compute the Hamming distance of $P_{tail}$ and every substring of $T$ which has length $|P_{tail}|$.


In \cref{sec:periodic-pattern-aribtrary-text}, we remove the periodicity assumption on $T$ by applying a technique by Clifford et al.~\cite{CFPSS16} which identifies at most one periodic region of $T$ that contains all the $k$-mismatch occurrences of $P$. Moreover, we drop the $n\le \frac32m$ assumption using a standard trick of partitioning $T$ into overlapping fragments of length $\frac32m$.

\section{Hamming Distance and the Convolution Summation \\ Problem}\label{sec:hd-to-csp}
Recall that the \emph{support} of a function $f$ is $\supp(f):=\{x \mid f(x)\ne 0\}$.
Let $|f|=|\supp(f)|$.
Throughout, we only consider functions with finite support mapping $\Z$ to $\Z$.
The \emph{convolution} of two functions $f,g:\Z\rightarrow \Z$  is a function $f\ast g:\Z\rightarrow \Z$ such that \[[f\ast g](i)=\sum_{j\in\Z} f(j)\cdot g(i-j).\]

For a string $X$ and a character $c\in \Sigma$,  the \emph{characteristic function} of $X$ and $c$ is $X_c:\Z\rightarrow\{0,1\}$ such that $X_c(i)=1$ if and only if $X[i]=c$.
For a string $X$, let $X^R$ be $X$ reversed.
The \emph{cross-correlation} of strings $X$ and $Y$ over $\Sigma$ is a function $X\otimes Y:\Z\rightarrow\Z$ such that
\[X\otimes Y=\sum_{c\in\Sigma} X_c\ast Y^R_c.\]

\begin{lemma}[{\cite[Fact 7.1]{CKP19}}]\label{lem:crosscorrelationHam}
	Let $P,T$ be strings.
	For $|P|-1\le i<|T|$, we have $[T\otimes P](i)=|P|-\HAM(P,T[i-|P|+1\dd i])$.
	For $i<0$ and for $i\ge|P|+|T|$, we have $[T\otimes P](i)=0$.
\end{lemma}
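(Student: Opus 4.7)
The plan is to unfold the definitions and reduce the sum to a count of positions where $P$ agrees with the relevant window of $T$. First I would write out
\[
[T\otimes P](i) \;=\; \sum_{c\in\Sigma}[T_c\ast P_c^R](i) \;=\; \sum_{c\in\Sigma}\sum_{j\in\Z} T_c(j)\cdot P_c^R(i-j).
\]
The key observation about the reversal is that, by the definition of $X_c$ and of $X^R$, we have $P_c^R(\ell)=P_c(|P|-1-\ell)$ whenever $0\le \ell\le |P|-1$, and $P_c^R(\ell)=0$ otherwise. Similarly $T_c$ is supported on $\{0,\ldots,|T|-1\}$.

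For the main range $|P|-1\le i<|T|$, I would substitute $\ell=|P|-1-(i-j)$, so that $j=\ell+i-|P|+1$ runs over $\{i-|P|+1,\ldots,i\}\subseteq\{0,\ldots,|T|-1\}$ as $\ell$ runs over $\{0,\ldots,|P|-1\}$. Then
\[
[T\otimes P](i) \;=\; \sum_{c\in\Sigma}\sum_{\ell=0}^{|P|-1}[P[\ell]=c]\cdot[T[\ell+i-|P|+1]=c].
\]
Swapping the order of summation, for each fixed $\ell$ exactly one character $c$ contributes, and it contributes $1$ precisely when $P[\ell]=T[\ell+i-|P|+1]$. Hence the double sum collapses to
\[
\sum_{\ell=0}^{|P|-1}\bigl[P[\ell]=T[\ell+i-|P|+1]\bigr] \;=\; |P|-\HAM\bigl(P,\,T[i-|P|+1\dd i]\bigr),
\]
which gives the claimed identity.

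For the boundary cases, I would argue directly from the supports. If $i<0$, then for any $j\in\supp(T_c)$ we have $j\ge 0>i$, so $i-j<0$, placing $i-j$ outside $\supp(P_c^R)$; every term in the double sum is therefore zero. If $i\ge |P|+|T|$, then a nonzero term would require both $j\le |T|-1$ and $i-j\le |P|-1$, forcing $j\ge i-|P|+1\ge |T|$, a contradiction; so again the sum vanishes. I do not expect a real obstacle here — the only thing to watch is keeping the index shift for $P^R$ consistent throughout, since a sign error there would propagate.
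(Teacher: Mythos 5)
Your proof is correct, and since the paper simply cites this as Fact~7.1 of Clifford et al.\ (CKP19) without reproducing a proof, there is no in-paper argument to compare against. Your direct unfolding of the definitions — using that $P_c^R(\ell)=P_c(|P|-1-\ell)$ on its support, reindexing so the inner sum counts positions where $P[\ell]=T[\ell+i-|P|+1]$, and handling the boundary cases via support disjointness — is exactly the standard verification of this identity, and all the index ranges check out.
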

By \cref{lem:crosscorrelationHam}, in order to compute $\HAM(P,T[i-|P|+1\dd i])$, it suffices to compute $[T\otimes P](i)$.

The \emph{backward difference} of a function $f:\Z \to \Z$ due to $\rho$ is $\Delta_\rho[f](i)= f(i)-f(i-\rho)$.
\begin{observation}[{\cite[{Obs.~7.2}]{CKP19}}]\label{obs:period-length-small-norm}
	If a string $X$ has a $d$-period $\rho$, then $\sum_{c\in \Sigma}|\Delta_\rho[X_{c}]| \le 2(d+\rho)$.
\end{observation}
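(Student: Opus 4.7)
The plan is to directly count, for each index $i \in \Z$, how much it contributes to $\sum_{c \in \Sigma}|\Delta_\rho[X_c]|$, and then sum the contributions across three natural ranges of $i$. Recall that $|\Delta_\rho[X_c]| = |\supp(\Delta_\rho[X_c])|$, so each index $i$ contributes $\sum_c \mathbf{1}[X_c(i) \ne X_c(i-\rho)]$.

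First I would split the range of $i$ where the sum can be nonzero (namely $0 \le i < n+\rho$, where $n = |X|$) into three disjoint pieces:
\begin{itemize}
\item For $0 \le i < \rho$: here $i - \rho < 0$, so $X_c(i-\rho) = 0$ for every $c$, and the only nonzero contribution comes from the unique $c = X[i]$. So each such $i$ contributes exactly $1$, for a total of $\rho$.
\item For $n \le i < n+\rho$: symmetrically, $X_c(i) = 0$ for every $c$, so only the unique $c = X[i-\rho]$ contributes, again giving a total of $\rho$.
\item For $\rho \le i < n$: both $X_c(i)$ and $X_c(i-\rho)$ can be nonzero. If $X[i] = X[i-\rho]$, no $c$ contributes; if $X[i] \ne X[i-\rho]$, then exactly two characters (namely $c = X[i]$ and $c = X[i-\rho]$) each contribute $1$, yielding $2$.
\end{itemize}

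Summing the middle range therefore yields $2 \cdot |\{i \in [\rho\dd n-1] : X[i] \ne X[i-\rho]\}| = 2\,\HAM(X[0\dd n-\rho-1], X[\rho\dd n-1])$, which is at most $2d$ by the assumption that $\rho$ is a $d$-period of $X$. Adding the contributions from the three ranges gives
\[\sum_{c \in \Sigma} |\Delta_\rho[X_c]| \le \rho + 2d + \rho = 2(d+\rho),\]
as claimed. No real obstacle is anticipated; the only point that needs a little care is the boundary bookkeeping to make sure that every index $i$ contributes exactly once to exactly one of the three cases.
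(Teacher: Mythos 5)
Your proof is correct and is the natural direct counting argument: partition the indices $i$, observe that each ``boundary'' index (where exactly one of $i, i-\rho$ lies in $[0\dd |X|-1]$) contributes exactly $1$ to $\sum_c |\Delta_\rho[X_c]|$, and each interior mismatch of the $\rho$-shift contributes exactly $2$ (once for $c=X[i]$ and once for $c=X[i-\rho]$). The paper does not prove this observation itself but cites it from Clifford et al.\ [CKP19, Obs.~7.2], and your argument matches the standard proof; the only point worth a remark is the degenerate case $\rho \ge |X|$, where your first and third ranges overlap and the middle range is empty, but there the overlapping indices contribute $0$ so the claimed bound still holds.
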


Our computation of $T\otimes P$ in a streaming fashion is based on the following lemma:
\begin{lemma}[Based on {\cite[Fact 7.4 and Corollary 7.5]{CKP19}}]\label{lem:crosscorrelation_computation}
	For every $i\in \mathbb{Z}$ and $\rho\in \mathbb{Z}_+$, we have
	$[T\otimes P](i)=\big[\sum_{c\in\Sigma}\Delta_\rho[T_c]\ast\Delta_\rho[P^R_c]\big](i)-[T\otimes P](i-2\rho)+2[T\otimes P](i-\rho)$.
\end{lemma}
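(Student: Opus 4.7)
The plan is to prove the identity termwise for each $c\in\Sigma$ and then sum. Concretely, I will establish that for any two functions $f,g:\Z\to\Z$ with finite support and any $\rho\in\Z_+$,
\[
[\Delta_\rho f \ast \Delta_\rho g](i)
 \;=\; [f\ast g](i) - 2\,[f\ast g](i-\rho) + [f\ast g](i-2\rho),
\]
and then specialize to $f=T_c$, $g=P^R_c$ and sum over $c\in\Sigma$.

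For the per-function identity, the key observation is that $\Delta_\rho f = f - S_\rho f$, where $S_\rho$ denotes the shift operator $(S_\rho h)(i)=h(i-\rho)$. A direct change of summation index shows that convolution commutes with shifts in either argument: $(S_\rho f)\ast g = f\ast(S_\rho g) = S_\rho(f\ast g)$, and consequently $(S_\rho f)\ast(S_\rho g) = S_{2\rho}(f\ast g)$. Using bilinearity of $\ast$, I expand
\[
\Delta_\rho f \ast \Delta_\rho g \;=\; (f-S_\rho f)\ast(g-S_\rho g)
\;=\; f\ast g - 2\,S_\rho(f\ast g) + S_{2\rho}(f\ast g),
\]
and evaluating at $i$ yields the claimed identity.

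Summing this identity over $c\in\Sigma$ with $f=T_c$ and $g=P^R_c$, and then invoking the definition $T\otimes P = \sum_{c\in\Sigma} T_c \ast P^R_c$, gives
\[
\Big[\sum_{c\in\Sigma}\Delta_\rho[T_c]\ast\Delta_\rho[P^R_c]\Big](i)
 \;=\; [T\otimes P](i) - 2\,[T\otimes P](i-\rho) + [T\otimes P](i-2\rho).
\]
Rearranging the terms yields exactly the statement of the lemma.

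I do not anticipate any real obstacle: the entire content of the lemma is that convolution is bilinear and translation-invariant, and the only care needed is bookkeeping of the three shifted copies of $f\ast g$ that appear when one expands $(f-S_\rho f)\ast(g-S_\rho g)$. The fact that the supports are finite justifies the interchange of summations used when swapping shifts through the convolution, and also permits reordering the sum over $c\in\Sigma$ with the convolution sum.
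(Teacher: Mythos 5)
Your proof is correct, and it is the natural argument one would expect behind the cited CKP19 facts: express $\Delta_\rho$ as $I - S_\rho$, use bilinearity and shift-invariance of convolution to get $\Delta_\rho f \ast \Delta_\rho g = f\ast g - 2S_\rho(f\ast g) + S_{2\rho}(f\ast g)$, sum over $c$, and rearrange. The paper itself does not reprove this lemma (it only cites Fact~7.4 and Corollary~7.5 of Clifford et al.), so there is no competing in-text argument to compare against; your derivation is a clean, self-contained reconstruction.
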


When computing $[T\otimes P](i)$, if the algorithm maintains a buffer of the last $2\rho$ values of $T\otimes P$, then the algorithm already has the values of $[T\otimes P](i-\rho)$ and $[T\otimes P](i-2\rho)$.
Thus, in order to construct $T\otimes P$, the focus is on constructing $\sum_{c\in\Sigma}\Delta_\rho[T_c]\ast\Delta_\rho[P^R_c]$.

\subsection{Convolution Summation Problem}\label{sec:csp}

We express the task of constructing $\sum_{c\in\Sigma}\Delta_\rho[T_c]\ast\Delta_\rho[P^R_c]$ in terms of a more abstract \emph{convolution summation} problem stated as follows.
The input is two sequences of functions $\mathcal F=(f_1,f_2,\ldots,f_\num)$ and $\mathcal G=(g_1,g_2,\ldots g_\num)$ such that for every $1\le i\le \num$ we have $f_i:\Z\rightarrow\Z$ and $g_i:\Z\rightarrow\Z$,
and the goal is to construct the function $\mathcal F \otimes \mathcal G = \sum_{j=1}^\num(f_j\ast g_j)$.

Let $\H$ be a sequence of functions.
We define the \emph{support} of  $\mathcal H$ as $\supp(\mathcal H)=\bigcup_{h\in\mathcal H}\supp(h)$.
The total number of non-zero entries in all of the functions of  $\mathcal H$ is denoted by
$\|\mathcal H\|= \sum_{h\in\mathcal H}|h|$.
The \emph{diameter} of a  function $f$ is $\diam(f)=\max(\supp(f))-\min(\supp(f))+1$ if $\supp(f)\ne\emptyset$, and $\diam(f)=0$ otherwise.
We define the \emph{diameter} of a sequence of functions $\mathcal H $ as
$\diam(\mathcal H)=\max\left\{\diam(h)\mid h\in\H\right\}$.

In our setting, the input for the convolution summation problem is two sequences of \emph{sparse functions}, which are functions that have a small support.
Thus, we assume that the input functions are given in an efficient \emph{sparse representation}, e.g., a linked list (of the non-zero functions) of linked lists (of non-zero entries).

\paragraph*{Algorithm for the offline convolution summation problem.}
The following lemma, provides an algorithm that efficiently computes $\mathcal{F\otimes G}$ for two sequences of functions $\F$ and $\G$ which are given in a sparse representation.
Notice that the output of the algorithm is also restricted to the non-zero values of $\mathcal{F\otimes G}$ only.

\begin{restatable}[{Based on~\cite[{Lemma 7.5}]{CGKKP20}}]{lemma}{offlineconvolution}\label{lem:offline-convolution}
	Let	$\mathcal{F}=(f_1,\ldots,f_\num)$ and $\mathcal{G}=(g_1,\ldots,g_\num)$ be two sequences of functions, such that
	$\diam(\F),\diam(\G)\in [1\dd n]$, and also $\diam(\F\otimes \G)=O(n)$.
	Then there exists an (offline) algorithm that computes the non-zero entries of $\mathcal{F\otimes G}$ using $O(n)$ space, whose time cost is
	\[	\Psi(\F,\G)=\Ohtilde\Bigg(\|\F\|+\|\G\|+\!\!\sum_{j=1}^t  \min(|f_j|  |g_j|,n)\Bigg)=\Ohtilde\left(\min\left(t n, \|\F\|\cdot \|\G\|, (\|\F\|+\|\G\|)\sqrt{n}\right)\right).
\]\end{restatable}
\begin{proof}
	
	There are two methods that the algorithm chooses from to compute each convolution $f_j \ast g_j$.
	The first method is to enumerate all pairs consisting of a non-zero entry in $f_j$ and in $g_j$.
	Using standard dictionary lookup techniques, the time cost of computing the convolution $f_j \ast g_j$ this way is $\Ohtilde(|f_j| |g_j|)$.
	The second method of computing $f_j \ast g_j$ is by FFT, which costs $\Ohtilde(n)$ time.
	The algorithm combines both methods by comparing $|f_j| |g_j|$ to $n$ for each $1\le j \le t$ and picking the cheaper method for each particular $j$. Thus, the time for computing $f_j \ast g_j$ for any $j$ is $\Ohtilde(\min (|f_j| |g_j|,\,n))$.
	Since $\F$ and $\G$ are given in a sparse representation, for $j$ values where both $f_j$ and $g_j$ are zero-functions, the algorithm spends no time at all, while for other $j$s the algorithm spends at least $O(1)$ time, on comparing $|f_j| |g_j|$ to $n$ even if $|f_j| |g_j|=0$.
	
	In order to reduce the space usage, the algorithm constructs $\mathcal{F\otimes G}$
	by iteratively computing
	the sum $\sum_{j=1}^i (f_j\ast g_j)$.
	In each iteration, the algorithm adds the function  $f_j \ast g_j$ to the previously stored sum of functions.
	The summation is stored in a dictionary mapping indices to non-zero values (providing $\Ohtilde(1)$-time access and updates).
	The cost of adding $f_j \ast g_j$ to the previous sum of functions is nearly linear in $|f_j\ast g_j|$ and thus dominated by the time cost of computing $f_j \ast g_j$.
	Hence, the total running time of the algorithm is
	\begin{align*}
	&\Ohtilde\left(\sum_{j\in[1\dd t]:|f_j|+|g_j|>0} (1+\min(|f_j| |g_j|,\,n))\right)\\
	&=\Ohtilde\left(\sum_{j\in[1\dd t]:|f_j|+|g_j|>0} 1+\sum_{j\in[1\dd t]:|f_j|+|g_j|>0}\min(|f_j| |g_j|,\,n)\right)\\
	&=\Ohtilde\left(\|\F\|+\|\G\|+\sum_{j=1}^t\min(|f_j| |g_j|,\,n)\right).
	\end{align*}

	The first bound is obtained by noticing that $\|F\|+\|\G\|=\sum_{j=1}^t|f_j|+|g_j|\le\sum_{j=1}^t 2n=O(tn)$. Thus, \[\Ohtilde\left(\|\F\|+\|\G\|+\sum_{j=1}^t\min(|f_j| |g_j|,\,n)\right)=\Ohtilde\left(tn+\sum_{j=1}^t n\right)=\Ohtilde(t n).\]
	
	Since $\|\F\|\ge1$ and $\|\G\|\ge1$, we have $\|\F\|+\|\G\|=O(\|\F\| \|\G\|)$.
	For each $j$, we have $|f_j|\le \|\F\|$, and therefore
	\[
	\sum_{j=1}^t \min(|f_j|  |g_j|,n) \le \sum_{j=1}^\num |f_j| |g_j|
	\:\le\:\sum_{j=1}^\num \|\F\|  |g_j|
	\:=\:\|\F\|\sum_{j=1}^\num   |g_j|
	\:=\:\|\F\| \|\G\|.
	\]
	The third bound is obtained by recalling that $\min(x,y)\le \sqrt{xy} \le x+y$ holds for every positive $x$ and $y$:
	\[\sum_{j=1}^t \min(|f_j| |g_j|,n)
	\le \sum_{j=1}^t \sqrt{|f_j|  |g_j|n}
	\le \sum_{j=1}^t (|f_j|+|g_j|)\sqrt{n}=(\|\F\|+\|\G\|)\sqrt{n}.\qedhere
	\]
\end{proof}

\section{Periodic Pattern and Text -- with Delay}\label{sec:periodic-everything-with-delay}

Our approach is based on the reduction to the convolution summation problem of \cref{sec:hd-to-csp}. The text arrives online, so we consider a similar setting for convolution summation.

\subsection{The Incremental Batched Convolution Summation Problem}\label{sec:icbsp}
In the incremental batched version of the convolution summation problem, the algorithm is given two sequences of $t$ functions $\F$ and $\G$, where both
$\supp(\F), \supp(\G)\subseteq[0\dd n-1]$.
The sparse representation of $\G$ is available for preprocessing, whereas $\F$ is revealed online in batches of diameter $s$: the $i$th batch consists of all of the non-zero entries of the functions of $\F$ in the range $[(i-1)\cdot s\dd i\cdot s)$, also in a sparse representation.
After each update, the goal is to compute the values of $\F\otimes \G$ in the same range as the input, $[(i-1)\cdot s\dd i\cdot s)$. In the rest of this section, we prove the following lemma.

\begin{lemma}\label{lem:online-batch-algorithm}
There exists a deterministic algorithm that solves the incremental batched convolution summation problem for $s=\Omega(\|\F\|+\|\G\|)$, using $O(s)$ space, $\Ohtilde( (\|\F\|+\|\G\|)\sqrt{s})$ time per batch arrival and $\tilde O\left(n+\min \left(\|\F\|\cdot\|\G\|,\frac {n(\|\F\|+\|\G\|)}{\sqrt s},\frac {tn^2}s\right)\right)$ total time.
\end{lemma}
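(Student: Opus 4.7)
The plan is to split $\G$ and (incrementally) $\F$ into blocks of diameter $s$ and invoke \cref{lem:offline-convolution} on selected pairs of blocks. Let $\F_{i'}$ and $\G_k$ denote the sequences of restrictions of the functions of $\F$ and $\G$ to $[(i'-1)s\dd i's)$ and $[(k-1)s\dd ks)$, respectively. Then $\F\otimes\G = \sum_{i',k}\F_{i'}\otimes\G_k$, and each summand $\F_{i'}\otimes\G_k$ is supported in $[(i'+k-2)s\dd (i'+k)s)$, so it contributes to only the two output windows indexed $i'+k-1$ and $i'+k$.

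The proposed algorithm preprocesses $\G$ into the blocks $\G_1,\ldots,\G_{n/s}$ and maintains a single length-$s$ buffer $B$ holding the accumulated contribution to the upcoming window. When batch $i$ of $\F$ arrives, it (i) uses \cref{lem:offline-convolution} to compute $C_{i'}:=\F_{i'}\otimes\G_{i+1-i'}$ for every valid $i'\in[\max(1,i+1-n/s)\dd i]$; (ii) outputs $B$ plus the restrictions of these $C_{i'}$ to window $i$; and (iii) replaces $B$ by the sum of the restrictions of these $C_{i'}$ to window $i+1$. The re-indexing $k=i+1-i'$ ensures that, within a single batch, each $\F_{i'}$ and each $\G_k$ participates in at most one convolution, and that across the whole execution every pair $(i',k)$ is convolved exactly once (in batch $i=i'+k-1$). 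Summing the $(\|\F_{i'}\|+\|\G_k\|)\sqrt{s}$ bound of \cref{lem:offline-convolution} over one batch therefore yields $\tilde O((\|\F\|+\|\G\|)\sqrt{s})$, giving the claimed per-batch time.

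For the total time, I would sum $\Psi(\F_{i'},\G_k)$ over all $O((n/s)^2)$ pairs $(i',k)$, using each of the three minimands of \cref{lem:offline-convolution} in turn to obtain
\[
\sum_{i',k}\|\F_{i'}\|\cdot\|\G_k\|\;\le\;\|\F\|\|\G\|,\qquad \sum_{i',k}(\|\F_{i'}\|+\|\G_k\|)\sqrt{s}\;\le\;\tfrac{n(\|\F\|+\|\G\|)}{\sqrt{s}},\qquad \sum_{i',k}t s\;\le\;\tfrac{t n^2}{s}.
\]
The additive $n$ in the total time absorbs both the output size and the per-call overhead $O(\|\F_{i'}\|+\|\G_k\|)$, which sums to $O((n/s)(\|\F\|+\|\G\|))=O(n)$ under the assumption $\|\F\|+\|\G\|=O(s)$. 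The total space is $O(s)$: $O(\|\F\|+\|\G\|)=O(s)$ for the sparse representations, $O(s)$ for the buffer $B$, and $O(s)$ auxiliary space for each invocation of \cref{lem:offline-convolution}. The main technical point is matching the per-batch and the total-time bounds simultaneously: processing all convolutions involving $\F_{i'}$ eagerly when batch $i'$ arrives would break the per-batch bound (too many $\G_k$'s would be touched at once), whereas the anti-diagonal schedule $\{(i',k):i'+k=i+1\}$ of newly needed pairs balances the load at exactly one $\F$-block and one $\G$-block per batch, which is what the analysis requires.
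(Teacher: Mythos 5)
Your proposal is correct and follows essentially the same route as the paper: decompose $\F$ and $\G$ into diameter-$O(s)$ blocks, invoke \cref{lem:offline-convolution} on block pairs scheduled along anti-diagonals so that each $\F$-block and $\G$-block is touched once per batch, and bound the total work by summing each of the three minimands of \cref{lem:offline-convolution} over all block pairs. The only difference is bookkeeping: you use non-overlapping diameter-$s$ blocks for $\G$ and a carry buffer $B$ to handle the fact that each $\F_{i'}\otimes\G_k$ straddles two output windows, whereas the paper instead widens the $\G$-blocks to overlapping intervals $\Gamma_a$ of diameter $2s$ (so that $\Phi_i-\Phi_{i-a}\subseteq\Gamma_a$), which lets it read off $[\F\otimes\G]|_{\Phi_i}$ directly from a single aggregate call $\F_i^*\otimes\G^*$ with no carry; the two variants are equivalent up to constant factors.
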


A natural approach for proving \cref{lem:online-batch-algorithm} is to utilize the algorithm of  \cref{lem:offline-convolution}, whose runtime depends on the diameters of a pair of sequences of functions.
Thus, in order to use this approach, we design a mechanism for reducing the diameters of $\F$ and $\G$ while still being able to properly compute the values of  $\F\otimes\G$.

\paragraph*{Reducing the diameter.}
For a function $h:\Z\rightarrow \Z$ and a domain $D\subseteq \Z$, let $h|_D:\Z \to\Z$ be a function where  $h|_D(i)=h(i)$ for $i\in D$ and $h|_D(i)=0$ for $i\notin D$.
For a sequence of functions $\mathcal H=(h_1,h_2,\ldots, h_\num)$, denote the sequence of functions restricted to domain $D$ as  $\mathcal H|_D=(h_1|_D,h_2|_D,\ldots, h_\num|_D)$.
For $a,b\in \Z$, let us define integer intervals $\Gamma_a=[s\cdot (a-1)\dd s\cdot (a+1))$
and
$\Phi_{b}=[s\cdot (b-1)\dd s\cdot b)$\footnote{Note that the $\Gamma$ intervals are used for partitioning $\G$ while $\Phi$ intervals are used for partitioning $\F$.}.
\begin{observation}\label{obs:partition}
	Each integer is in exactly one range $\Phi_b$ and in exactly two ranges $\Gamma_a$.
\end{observation}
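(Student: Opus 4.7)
The plan is a direct case analysis based on the integer division of $n$ by $s$. I would write $n = qs + r$ with $q = \lfloor n/s\rfloor \in \Z$ and $0 \le r < s$, and then reduce each of the two claims to counting integer solutions of elementary inequalities on the index $a$ or $b$.

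For the $\Phi$ claim, I would observe that $n \in \Phi_b$ is equivalent to $s(b-1) \le n < sb$, i.e.\ $b - 1 \le n/s < b$. Since $q \le n/s < q+1$, the only integer $b$ satisfying this is $b = q+1$, and it indeed works because $sq \le n < s(q+1)$. This gives existence and uniqueness simultaneously.

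For the $\Gamma$ claim, $n \in \Gamma_a$ is equivalent to $s(a-1) \le n < s(a+1)$, i.e.\ $a - 1 \le n/s < a+1$. Using $q \le n/s < q+1$, the second inequality forces $a \ge q$ and the first forces $a \le q+1$, so the only candidates are $a \in \{q, q+1\}$. Both candidates indeed satisfy the original membership condition: for $a = q$ one checks $s(q-1) \le sq \le n < s(q+1)$, and for $a = q+1$ one checks $sq \le n < s(q+1) \le s(q+2)$. Thus exactly two values of $a$ satisfy $n \in \Gamma_a$.

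The main obstacle: there essentially isn't one — this observation is a routine consequence of the definitions. The only point requiring minor care is using the weak inequality $q \le n/s$ rather than a strict one, so that the boundary case $r = 0$ (when $n$ is an exact multiple of $s$) is handled uniformly with the generic case and still yields exactly two values of $a$.
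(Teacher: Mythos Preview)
Your argument is correct; the paper does not give a proof of this observation at all, treating it as immediate from the definitions of $\Phi_b$ and $\Gamma_a$, so your write-up is strictly more detailed than what the paper provides. One small presentational issue: you use $n$ for the arbitrary integer and $q$ for its quotient by $s$, but in the surrounding section $n$ already denotes the text length and $q=\lceil (n+1)/s\rceil$ is already defined, so you should pick fresh names (say $x=q's+r$ with $q'=\lfloor x/s\rfloor$) to avoid a clash.
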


Notice that the $i$th batch consists of $\F|_{\Phi_i}$. In response to the $i$th batch, the algorithm needs to compute $[\F\otimes\G]|_{\Phi_i}$.
For this, we express $\F_i=\F|_{[0\dd si)}=\F|_{\Phi_1\cup\Phi_2\cup\cdots\cup\Phi_i}$ (which is the aggregate of all the batches received so far) and $\G$ in terms of two sequences of functions $\F_i^*$ and $\G^*$, 
so that $[\F\otimes\G]|_{\Phi_i} = [\F_i^*\otimes\G^*]|_{\Phi_i}$.
The motivation for using $\F_i^*$ and $\G^*$ is to reduce the diameter, which comes at the price of increasing the number of functions.

Let $q=\ceil{\frac {n+1}s}$, and define
\begin{align*}
\F^*_i&=\big(f_1|_{\Phi_{i-0}},\ldots,f_t|_{\Phi_{i-0}}
,f_1|_{\Phi_{i-1}},\ldots, f_{\num}|_{\Phi_{i-1}}, \ldots\ldots\ldots, f_1|_{\Phi_{i-(q-1)}},\ldots, f_{\num}|_{\Phi_{i-(q-1)}}\big),\\
\G^*&=\big(g_1|_{\Gamma_{0\phantom{-0}}},\ldots,
g_t|_{\Gamma_{0\phantom{-0}}},g_1|_{\Gamma_{1\phantom{-0}}},\ldots, g_{\num}|_{\Gamma_{1\phantom{-0}}},\ldots\ldots\ldots, g_1|_{\Gamma_{q-1\phantom{-(0)}}},\ldots, g_{\num}|_{\Gamma_{q-1\phantom{-(0)}}}\big).
\end{align*}

\begin{lemma}\label{clm:Fstari-Gstar}
	For any $i\in[1\dd q]$ we have
	$[\F\otimes \G]|_{\Phi_i}=[\F_i^*\otimes \G^*]|_{\Phi_i}$.
\end{lemma}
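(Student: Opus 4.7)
The plan is to prove the claim pointwise: for each $x\in\Phi_i$, I will show that $[\F\otimes\G](x)=[\F_i^*\otimes\G^*](x)$. Unfolding the right-hand side gives
$[\F_i^*\otimes\G^*](x)=\sum_{j=1}^{t}\sum_{r=0}^{q-1}\sum_{y\in\Z}f_j|_{\Phi_{i-r}}(y)\cdot g_j|_{\Gamma_r}(x-y)$,
and each summand is nonzero only if $y\in\Phi_{i-r}$ \emph{and} $x-y\in\Gamma_r$.

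The heart of the argument is a short interval-arithmetic check: if $x\in\Phi_i=[s(i-1)\dd si)$ and $y\in\Phi_{i-r}=[s(i-r-1)\dd s(i-r))$, then $s(r-1)<x-y<s(r+1)$, so the condition $x-y\in\Gamma_r$ follows automatically from $y\in\Phi_{i-r}$. By Observation~\ref{obs:partition}, each integer $y$ lies in a unique $\Phi_b$; substituting $b=i-r$ therefore yields
$[\F_i^*\otimes\G^*](x)=\sum_{j=1}^{t}\sum_{\substack{b\in\Z\\ i-b\in[0\dd q-1]}}\sum_{y\in\Phi_b}f_j(y)\,g_j(x-y)$.

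Finally, I would verify that enlarging the $b$-range to all of $\Z$ adds only zero contributions, after which the disjoint partition $\Z=\bigsqcup_b\Phi_b$ collapses the expression to $\sum_{j=1}^{t}\sum_{y\in\Z}f_j(y)\,g_j(x-y)=[\F\otimes\G](x)$, as required. Since $i\le q$, any $b$ with $i-b\notin[0\dd q-1]$ satisfies either $b\le i-q\le 0$, in which case $\Phi_b$ contains only negative integers so $f_j$ vanishes there by $\supp(f_j)\subseteq[0\dd n-1]$, or $b>i$, in which case every $y\in\Phi_b$ satisfies $y\ge si>x$, forcing $x-y<0$ and hence $g_j(x-y)=0$ by $\supp(g_j)\subseteq[0\dd n-1]$.

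The only delicate step is the interval-arithmetic bound on $x-y$ that makes the $\Gamma_r$-constraint redundant; the remainder is bookkeeping using the support hypotheses and Observation~\ref{obs:partition}.
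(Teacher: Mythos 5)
Your proof is correct and follows essentially the same route as the paper's: both hinge on the interval-arithmetic fact $\Phi_i-\Phi_{i-r}\subseteq\Gamma_r$ (so the $\Gamma_r$ restriction is vacuous once $x\in\Phi_i$ and $y\in\Phi_{i-r}$), the disjoint partition $\Z=\bigsqcup_b\Phi_b$, and the support conditions to kill out-of-range indices; you simply run the computation pointwise from $\F_i^*\otimes\G^*$ back to $\F\otimes\G$, whereas the paper starts from $\F\otimes\G$ and decomposes.
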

\begin{proof}
	For two sets $X,Y\subseteq\Z$, we denote $X-Y=\{x-y\mid x\in X\text{ and }y\in Y\}$.
	For any pair of functions $f,g:\Z\rightarrow \Z$ and any $j\in\I_i$,
	since the ranges $\Phi_b$ for $b\in\Z$ form a partition of $\Z$, we have that $(f\ast g )(j)=\sum_{a\in \Z}(f|_{\Phi_{i-a}}\ast g)(j)$. Moreover, by the definition of the convolution operator and since $\Phi_i-\Phi_{i-a}\subseteq \Gamma_a$, we have
	$\sum_{a\in \Z}(f|_{\Phi_{i-a}}\ast g)(j)=\sum_{a\in \Z}(f|_{\Phi_{i-a}}\ast g|_{\Gamma_a})(j)$.
	Hence, 
	$[\F\otimes \G]|_{\Phi_i}=\sum_{a\in \Z}[\F|_{\Phi_{i-a}}\otimes\G|_{\Gamma_a}]|_{\Phi_i}$.
	However, since $\supp(\F)\subseteq [0\dd qs)$, we have that for every $b\le 0$ and every $f\in\F$, it must be that $f|_{\Phi_b}=0$. Similarly, since $\supp(\G)\subseteq [0\dd qs)$, we have that for every	
	$a< 0$  and every $g\in\G$,  it must be that $g|_{\Gamma_a}=0$.
	Thus, for $a\notin [0\dd q-1]$, we have that $\F|_{\Phi_{i-a}}\otimes \G|_{\Gamma_a}=0$, 
	and so $[\mathcal F\otimes \mathcal G]|_{\Phi_i}=\sum_{a=0}^{q-1}[\F|_{\Phi_{i-a}}\otimes\G|_{\Gamma_a}]|_{\Phi_i}$.
	Finally, since $\F_i^*$ is the concatenation of $\F|_{\Phi_{i-a}}$ for $a\in[0\dd q-1]$, 
	and $\G^*$  is the concatenation of  $\G|_{\Gamma_{a}}$ for $a\in[0\dd q-1]$, the lemma follows.
\end{proof}

The following observation is a consequence of the definitions of $\F^*$ and $\G^*$, and \cref{obs:partition}.
Recall that the diameter of a sequence of functions $\mathcal H $ is
$\diam(\mathcal H)=\max\left\{\diam(h)\mid h\in\H\right\}$.

\begin{observation}\label{clm:FG-bounds}
	The sequences $\F^*_i$ and $\G^*$ consists of $\num\cdot q$ functions each.
	Moreover, $\diam(\F_i^*)\le s$, $\diam(\G^*)\le 2s$,  $\|\F^*_i\|\le\|\F\|$, and  $\|\G^*\|\le 2\cdot\|\G\|$.
\end{observation}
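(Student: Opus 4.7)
The plan is to verify each of the four claimed bounds directly from the definitions of $\F^*_i$ and $\G^*$, together with \cref{obs:partition}. None of the claims should require any ingenuity; the only thing to be careful about is distinguishing the ``$s$ vs $2s$'' and ``$\le \|\F\|$ vs $\le 2\|\G\|$'' asymmetries, both of which come from the fact that the $\Phi_b$ intervals partition $\Z$ while the $\Gamma_a$ intervals double-cover $\Z$.

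For the counting claim, I would simply observe from the displayed definitions that $\F^*_i$ is the concatenation of $q$ blocks $\bigl(f_1|_{\Phi_{i-a}},\ldots,f_t|_{\Phi_{i-a}}\bigr)$ for $a=0,1,\ldots,q-1$, and similarly $\G^*$ is the concatenation of $q$ blocks $\bigl(g_1|_{\Gamma_a},\ldots,g_t|_{\Gamma_a}\bigr)$ for $a=0,1,\ldots,q-1$. Each block has $t$ functions, hence $\num\cdot q$ functions in total.

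For the two diameter claims, I would note that for every function $h:\Z\to\Z$ and every $D\subseteq\Z$, $\supp(h|_D)\subseteq D$, so $\diam(h|_D)\le |D|$ whenever $D$ is an integer interval. Applying this to each entry of $\F^*_i$ yields $\diam(f_j|_{\Phi_{i-a}})\le |\Phi_{i-a}|=s$, and for each entry of $\G^*$ it yields $\diam(g_j|_{\Gamma_a})\le |\Gamma_a|=2s$. Taking the max over the entries gives $\diam(\F^*_i)\le s$ and $\diam(\G^*)\le 2s$.

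For the two norm claims, the key fact is \cref{obs:partition}: each integer lies in exactly one $\Phi_b$ and in exactly two $\Gamma_a$. Consequently, for a fixed $j$, the families $\{\supp(f_j)\cap \Phi_{i-a}\}_{a=0}^{q-1}$ are pairwise disjoint subsets of $\supp(f_j)$, so
\[
\sum_{a=0}^{q-1}\bigl|f_j|_{\Phi_{i-a}}\bigr|\;\le\;|f_j|,
\]
and summing over $j\in[1\dd t]$ gives $\|\F^*_i\|\le\sum_{j=1}^t|f_j|=\|\F\|$. Analogously, for a fixed $j$, every element of $\supp(g_j)$ is counted in at most two of the sets $\supp(g_j)\cap\Gamma_a$ for $a=0,1,\ldots,q-1$, so $\sum_{a=0}^{q-1}|g_j|_{\Gamma_a}|\le 2|g_j|$; summing over $j$ yields $\|\G^*\|\le 2\|\G\|$. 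Since all four bounds reduce to one-line consequences of the definitions and \cref{obs:partition}, there is no real obstacle; the only thing worth stressing in the write-up is where the factor $2$ in the $\G^*$ bounds originates.
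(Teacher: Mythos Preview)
Your proposal is correct and matches the paper's approach: the paper states this observation as an immediate consequence of the definitions of $\F^*_i$, $\G^*$ and \cref{obs:partition}, without giving an explicit proof, and your write-up simply unpacks those definitions (the $\Phi$-partition giving the factor~$1$ bounds and the $\Gamma$-double-cover giving the factor~$2$ bounds) exactly as intended.
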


\paragraph*{The algorithm.}
During the preprocessing phase, the algorithm transforms $\G$ into $\G^*$, and constructs $\F_0^*$ which is an empty linked list.

Upon receiving the $i$th batch, which is $\F|_{\phi_i}$, the algorithm computes $\F_i^*$ as follows:
First, the algorithm concatenates $\F|_{\Phi_i}$ with $\F_{i-1}^*$ and truncates the last $t$ functions from the resulting sequence of $(q+1)t$ functions.
It is easy to implement the concatenation (including updating indices in the sparse representation) with a time cost which is linear in the number of non-zero functions in $\F|_{\Phi_i}$ and $\F_{i-1}^*$,  and is at most $O(\|\F\|)$.
The implementation of the truncation is void since only the first $(i-1)\cdot t\le (q-1)\cdot t$ functions of $\F_{i-1}^*$ are non-zero\footnote{The reason for mentioning  the truncation, even though it is void, is in order to guarantee that $\F_i^*$ matches the mathematical definition that is used.}.

Next, the algorithm applies the procedure of \cref{lem:offline-convolution} in order to compute $\F_i^*\otimes\G^*$.
Finally, the algorithm returns $[\F_i^*\otimes\G^*]|_{\Phi_i}$ which,  by \cref{clm:Fstari-Gstar}, is $[\F\otimes\G]|_{\Phi_i}$.

\paragraph*{Complexities.}
The preprocessing phase is done in $O(\|\G\|)$ time using $O(s)$ space.

For the $i$th batch, the computation of $\F_i^*$ costs $O(\|\F\|)$ time, which is $O(q\|\F\|)=O(\frac{n}{s}\|\F\|)=O(n)$ time in total because $s=\Omega(\|\F\|+\|\G\|)$.
Then, the computation of $\F_i^*\otimes\G^*$ with the procedure of \cref{lem:offline-convolution} costs $O(\Psi(\F_i^*,\G^*))=\Ohtilde((\|\F_i^*\|+\|\G^*\|)\sqrt{s})=\Ohtilde((\|\F\|+\|\G\|)\sqrt{s})$ time.
In the following we derive several upper bounds on $\sum_{i=1}^{q}\Psi(\F_i^*,\G^*)$, which together upper bound the total time cost.

\paragraph{Total time
$\tilde O(\frac {n(\|\F\|+\|\G\|)}{\sqrt s})$.}
Recall that $q=O(\tfrac ns)$, and, by \cref{clm:FG-bounds}, for any $i\in[1\dd q]$ we have $\|\F_i^*\|\le\|\F\|$ and $\|\G^*\|\le 2 \|\G\|$. Thus,
\begin{multline*}
\sum_{i=1}^{q}\Psi(\F_i^*,\G^*)=
\sum_{i=1}^{q} \Ohtilde\left((\|\F_i^*\|+\|\G^*\|)\sqrt{s}\right)=\Ohtilde\left(\sum_{i=1}^{q} (\|\F\|+\|\G\|)\sqrt{s}\right)=\\
\Ohtilde\left(q\cdot(\|\F\|+\|\G\|)\sqrt{s}\right)=\Ohtilde(
\tfrac ns\cdot(\|\F\|+\|\G\|)\sqrt{s})=\tilde O\left(\tfrac{n(\|\F\|+\|\G\|)}{\sqrt s}\right).
\end{multline*}

\paragraph{Total time
$\tilde O(\frac {tn^2}s)$.}
Recall that for any $i\in[1\dd q]$ the sequences $\G^*$ and $\F_i^*$ consist of $q\cdot t$ functions of diameter $O(s)$, and $q=O(\tfrac ns)$. Thus,
\begin{align*}
\sum_{i=1}^{q}\Psi(\F_i^*,\G^*)= \sum_{i=1}^{q}\Ohtilde\left(q\cdot t\cdot s\right)=\Ohtilde\left(q^2\cdot t\cdot s\right)=\Ohtilde(\tfrac {n^2}{s^2}\cdot t\cdot s)=\Ohtilde\left(\tfrac{tn^2}s\right)
\end{align*}

\paragraph{Total time
	$\tilde O(n+\|\F\|\cdot\|\G\|)$.}
Let $\X = \{x_1,x_2,\ldots,x_\tau\}, \Y = \{y_1,y_2,\ldots,y_\tau\}$ be two sequences of $\tau$ functions from $\Z$ to $\Z$. Let $\nu = \max(\diam (\X),\diam(\Y))$.
Recall that
\[\Psi(\X,\Y)=\Ohtilde\left(\|\X\|+\|\Y\|+\sum_{j=1}^\tau  \min(|x_j|  |y_j|,\,\nu)\right)
=\Ohtilde\left(\|\X\|+\|\Y\|+\sum_{j=1}^\tau  |x_j|  |y_j|\right).\]
In our case, the run time is $\sum_{i=1}^{q}\Psi(\F_i^*,\G^*)$.
Recall that the functions in $\F_i^*$ are
$f_j|_{\Phi_{i-a}}$ for integers $0\le a\le q-1$ and $1\le j \le t$.
Similarly, the functions in $\G^*$ are defined by $g_j|_{\Gamma_a}$ for integer $0\le a\le q-1$.
%
%
Thus,
\begin{align*}
\sum_{i=1}^{q}\Psi(\F_i^*,\G^*)
&=\sum_{i=1}^{q}\tilde O\left(\|\F_i^*\|+\|\G^*\|+\sum_{a=0}^{q-1}\sum_{j=1}^t \Big|f_j|_{\Phi_{i-a}}\Big|  \Big|g_j|_{\Gamma_a}\Big|\right)\\
&=\sum_{i=1}^{q}\tilde O\left(\|\F\|+\|\G\|\right)+\tilde O\left(\sum_{j=1}^t \sum_{a=0}^{q-1} \Big|g_j|_{\Gamma_a}\Big| \sum_{i=1}^{q} \Big|f_j|_{\Phi_{i-a}}\Big|  \right)&\triangleright\text{ by Obs.~\ref{clm:FG-bounds}}\\
%
%
&= \tilde O(\tfrac ns s)+ \tilde O\left(\sum_{j=1}^t \sum_{a=0}^{q-1}  \Big|g_j|_{\Gamma_a}\Big| \Big|f_j\Big|\right)&\triangleright\text{ by Obs.~\ref{obs:partition}}\\
%
%
&= \tilde O(n)+\tilde O\left(\sum_{j=1}^t \Big|g_j\Big|  \Big|f_j\Big|\right)&\triangleright\text{ by Obs.~\ref{obs:partition}}\\
&= \tilde O(n)+\tilde O\left(\|\G\|\sum_{j=1}^t  \Big|f_j\Big|\right) = \tilde O(n)+\tilde O\left(\|\G\|\cdot \|\F\|\right)&\hspace{-22pt}=\tilde O(n+\|\F\|\cdot\|\G\|)
\end{align*}
This completes the proof of \cref{lem:online-batch-algorithm}.\qed%

\subsection{Reduction from Hamming Distance to Incremental Batch Sparse Convolution Summation problem}

Now we show how to compute the Hamming distance between $P$ and substrings of $T$ with delay of $2s$, based on the algorithm of \cref{lem:online-batch-algorithm}.
Our result is stated in the following lemma.

\begin{lemma}\label{lem:kmm-periodic-everything-head}
	Suppose that there exists $\rho\le k$ which is a $d$-period of both $P$ and $T$ for some $d=O(k)$.
	Then, there exists a deterministic streaming algorithm for the $k$-mismatch problem with $n= \frac32m$ that, given an integer parameter $k\le s \le m$, uses $\tilde O(s)$ space and costs $\tilde O\left(m+\min\left(k^2,\frac {mk}{\sqrt s},\frac {\sigma m^2}{s}\right)\right)$ total time. Moreover, the worst-case time cost per character is $\tilde O(\sqrt k)$.
	The algorithm has delay of $2s$ characters.	
\end{lemma}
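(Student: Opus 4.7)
The plan is to reduce to the incremental batched convolution summation problem of \cref{lem:online-batch-algorithm}, via the recurrence of \cref{lem:crosscorrelation_computation}. Since $\rho$ is a $d$-period of both $P$ (hence of $P^R$) and $T$ with $d=O(k)$, \cref{obs:period-length-small-norm} yields $\sum_{c\in\Sigma}|\Delta_\rho[T_c]|=O(k)$ and $\sum_{c\in\Sigma}|\Delta_\rho[P^R_c]|=O(k)$. Accordingly, I would set $\F=(\Delta_\rho[T_c])_{c\in\Sigma}$ and $\G=(\Delta_\rho[P^R_c])_{c\in\Sigma}$, so that $t=\sigma$, $\|\F\|+\|\G\|=O(k)$, and $\supp(\F),\supp(\G)\subseteq[0\dd n)$ with $n=\Theta(m)$. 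The hypothesis $s\ge k=\Omega(\|\F\|+\|\G\|)$ makes \cref{lem:online-batch-algorithm} applicable.

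During preprocessing, the algorithm stores a sparse representation of $\G$ (of total size $O(k)=O(s)$) and initializes the structure of \cref{lem:online-batch-algorithm}. As $T$ streams in, each arriving character $T[j]$ modifies at most two entries of $\F$ at index $j$, namely those for characters $T[j]$ and $T[j-\rho]$, which can be enumerated in $O(1)$ time. After $T[is-1]$ arrives, the $i$-th batch $\F|_{\Phi_i}$ is fully determined and is fed to the structure, which eventually returns $[\F\otimes\G]|_{\Phi_i}=\bigl[\sum_{c}\Delta_\rho[T_c]\ast\Delta_\rho[P^R_c]\bigr]|_{\Phi_i}$. The recurrence of \cref{lem:crosscorrelation_computation} is then unrolled over $j\in\Phi_i$ in increasing order, using a buffer of the most recent $2\rho\le 2s$ values of $T\otimes P$, to obtain $[T\otimes P](j)$ for all $j\in\Phi_i$; by \cref{lem:crosscorrelationHam}, this yields $\HAM(P,T[j-m+1\dd j])$ (or certifies that it exceeds $k$) for every $j\ge m-1$.

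Substituting $t=\sigma$, $n=\Theta(m)$, and $\|\F\|+\|\G\|=O(k)$ into the total-time bound of \cref{lem:online-batch-algorithm} gives the claimed $\tilde O\bigl(m+\min(k^2,mk/\sqrt s,\sigma m^2/s)\bigr)$ total time, while the space usage stays $\tilde O(s)$, counting the structure of \cref{lem:online-batch-algorithm}, the $2\rho$-value buffer for $T\otimes P$, and the most recent $\rho$ characters of $T$ needed to form $\Delta_\rho$.

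The main subtlety is converting the per-batch bound $\tilde O((\|\F\|+\|\G\|)\sqrt s)=\tilde O(k\sqrt s)$ from \cref{lem:online-batch-algorithm} into the claimed worst-case per-character bound of $\tilde O(\sqrt k)$. I would handle this by scheduling the processing of the $i$-th batch uniformly over the next $s$ character arrivals, so that each character pays $\tilde O(k\sqrt s/s)=\tilde O(k/\sqrt s)$; the inequality $s\ge k$ gives $k/\sqrt s\le\sqrt k$. The $i$-th batch completes arrival at time $is-1$ and its output therefore becomes available by time $(i+1)s-1$, so each Hamming-distance value for $j\in\Phi_i$ is reported within at most $2s-1$ characters after $T[j]$ arrives, matching the required $2s$-delay.
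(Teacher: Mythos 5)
Your proposal is correct and follows essentially the same route as the paper's proof: you define $\F$ and $\G$ as the backward differences of the characteristic functions of $T$ and $P^R$, invoke \cref{obs:period-length-small-norm} for the $O(k)$ sparsity bound, feed $\F$ in blocks of size $s$ to the procedure of \cref{lem:online-batch-algorithm}, unroll the recurrence of \cref{lem:crosscorrelation_computation} with a $2\rho$-value buffer, and de-amortize the per-batch $\tilde O(k\sqrt s)$ cost over $s$ arrivals using $s\ge k$ to get $\tilde O(\sqrt k)$ per character and a $2s$-character delay. No substantive difference from the paper's argument.
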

\begin{proof}
The algorithm receives the characters of $T$ one by one, and splits them into blocks of length $s$ so that the indices in the $b$th block form $\Phi_b$, as defined in \cref{sec:icbsp}.
The last $\Oh(s)$ characters of $T$ and the last $\Oh(s)$ values of $[T\otimes P]$ are buffered in $O(s)$ space.

For $\Sigma =\{c_1,c_2,\ldots,c_\sigma\}$, define sequences of functions $\G=(\Delta_\rho[P^R_{c_1}],\Delta_\rho[P^R_{c_2}],\ldots,\Delta_\rho[P^R_{c_\sigma}])$
and $\F=(\Delta_\rho(T_{c_1}),\Delta_\rho(T_{c_2}),\ldots,\Delta_\rho(T_{c_\sigma}))$.
The algorithm initializes an instance of the procedure of \cref{lem:online-batch-algorithm} with $\G$ during the arrival of the first block.
During the  arrival of the $b$th block, the algorithm creates the batch $\F|_{\Phi_b}$ as follows: after the arrival of $T[i]$, if $T[i]\ne T[i-\rho]$, then the algorithm sets $\Delta_\rho[T_{T[i]}](i)$ to be $1$ and $\Delta_\rho[T_{T[i-\rho]}](i)$ to be $-1$.
During the arrival of the $(b+1)$th block, the batch $\F|_{\Phi_b}$ is processed using \cref{lem:online-batch-algorithm} in order to compute $[\F \otimes \G]|_{\Phi_b}$.
Finally, for  all $i\in \Phi_b$, the algorithm uses the buffer of $[T\otimes P]$ together with the values of $[\F\otimes G]|_{\Phi_b}$  in order to report  the values $[T\otimes P](i)=[\F \otimes \G](i)-[T\otimes P](i-2\rho)+2[T\otimes P](i-\rho)$ (see \cref{lem:crosscorrelation_computation}).

\paragraph{Complexities.}
For each incoming character of $T$, updating the text buffer and the functions $\Delta_\rho[T_c]$ for all $c\in \Sigma$ costs $\Oh(1)$ time since changes are needed in $\Delta_\rho[T_c]$ for at most two characters $c$. 
Since $\rho \le k$ is a $d$-period of both $P$ and $T$, \cref{obs:period-length-small-norm} yields $\|\G\|\le2(k+d)=O(k)$ and $\|\F\|\le 2(k+d)=O(k)$.
Thus, the initialization of the procedure of \cref{lem:online-batch-algorithm} costs $O(\|\G\|)=O(k)$ time,
and executing the procedure of \cref{lem:online-batch-algorithm} on the $b$th batch $\F|_{\Phi_b}$ costs $\Ohtilde((\|\F\|+\|\G\|)\sqrt{s})=\Ohtilde(k\sqrt{s})$ time.
So, applying the algorithm of \cref{lem:online-batch-algorithm} during the arrival of \emph{any} block 
costs $\Ohtilde(k+k\sqrt{s})=\Ohtilde(k\sqrt{s})$ time, which, by a standard de-amortization, is $\Ohtilde(\frac{1}{s}\cdot k\sqrt{s})=\tilde O(\frac{k}{\sqrt{s}})=\tilde O(\sqrt k)$ time per character.

Note that $\supp(\G)\subseteq[0\dd m+\rho]=[0\dd O(m)]$ and $\supp(\F)\subseteq[0\dd \frac 32m+\rho]=[0\dd O(m)]$.
Moreover, both $\F$ and $\G$ are sequences of $\sigma$ functions.
Hence, the total time cost of applying the algorithm of \cref{lem:online-batch-algorithm},  updating the buffers, and computing the functions $\Delta_\rho[T_c]$ is $\tilde O\left(m+\min \left(\|\F\|\cdot\|\G\|,\frac {m(\|\F\|+\|\G\|)}{\sqrt s},\frac {\sigma m^2}s\right)\right)= \tilde O\left(m+\min \left(k^2,\frac {mk}{\sqrt s},\frac {\sigma m^2}s\right)\right)$.

\paragraph{Delay.} For any index $i\in \Phi_b$, after the arrival of $T[i]$, at most $s$ character arrivals take place until the call to the procedure of \cref{lem:online-batch-algorithm} involving $\F|_{\Phi_b}$. Moreover, due to the de-amortization, the computation of all the results $[T\otimes P]|_{\phi_b}$ takes place during the arrivals of another $s$ characters.
Hence, the delay of the algorithm is at most $2s$ character arrivals.
\end{proof}

\section{Periodic Pattern and Text -- without Delay}\label{sec:periodic-everything-without-delay}

In this section, we show how to compute the distances between $P$ and substrings of $T$ without any delay, assuming that $\rho\le k$ is a $d$-period of both $P$ and $T$ for some $d=\Oh(k)$.
Our approach is to use the tail partition technique, described in \cref{sec:preliminaries}. To do so, we set $|P_{tail}|= 2s$ and use the algorithm of \cref{lem:kmm-periodic-everything-head} on $P_{head}$ (notice that $\rho$ is a $d$-period of both $P_{head}$ and $P_{tail}$).
The remaining task is to describe how to compute  $\HAM(P_{tail},T[i-2s+1\dd  i])$.

In the online version of the convolution summation problem, the algorithm is  given two sequences of functions $\F$ and $\G$, where $\supp(\F)\subseteq[0\dd n-1]$ and $\supp(\G)\subseteq[0\dd m-1]$.
The sparse representation of $\G$ is available for preprocessing, whereas $\F$ is revealed online index by index: At the $i$th update, the algorithm receives $\F|_{\{i\}}$, i.e., all the non-zero entries $f(i)$ for $f\in \F$,
and the task is to compute $[\F\otimes \G](i)$.
The procedure we use for this problem is based on the algorithm of Clifford et al.~\cite{CEPP11}.
Nevertheless, since we state the procedure in terms of the convolution summation problem, whereas~\cite{CEPP11} states the algorithm in terms of pattern matching problems, we describe the details in the proof.

\begin{restatable}[{Based on~\cite[{Theorem 1}]{CEPP11}}]{lemma}{lemblackbox}\label{lem:black-box}
	Let	$\mathcal F$ and $\mathcal G$ be two sequences of $t$ functions	each such that $\supp(\F)\subseteq[0\dd n-1]$, $\supp(\G)\subseteq[0\dd m-1]$, and $\delta = \max(\max_i \|\F|_{\{i\}}\|,\max_i\|\G|_{\{i\}}\|)$ is the maximum number of non-zero entries at a single index.
	There exists an online algorithm that upon receiving $\F|_{\{i\}}$ for subsequent indices $i$ computes $[\F\otimes\G](i)$ using $O(\delta m)$ space in $\Ohtilde(\sqrt{\delta(\|\F\|+\|\G\|)})$ time per index.
	Moreover, the total running time of the algorithm is $\tilde O(\min(n\delta+(\|\F\|+\|\G\|)\sqrt{n}, nt))$.
\end{restatable}
\begin{proof}
	Let $M = \ceil{\log (m+2)}-1$.
	For every $a\in[1\dd M]$, define $\Gamma_a=[2^a-2\dd 2^{a+1}-3]$, and for every $b\in\big[0\dd \floor{\tfrac n{2^{a-1}}}\big]$, define $\Phi_{a,b}=[(b-4)\cdot2^{a-1}+2\dd (b-1)\cdot 2^{a-1}+1]$.
	Notice that the intervals $\Gamma_a$ and $\Phi_{a,b}$ are of length $2^a$ and $3\cdot 2^{a-1}$, respectively.
	Moreover, each integer from $[0\dd m-1]$ appears in exactly one range $\Gamma_a$,
	and for every $a\in[1\dd M]$, each integer from $[0\dd n-1]$ appears in at most three ranges $\Phi_{a,b}$.
	
	Let $\G_a=\G|_{\Gamma_a}$ and $\F_{a,b}=\F|_{\Phi_{a,b}}$.
	The algorithm computes, for every  $a\in[1\dd M]$ and every $b\in\big[0\dd \floor{\frac n{2^{a-1}}}\big]$, the convolution $\G_a\otimes \F_{a,b}$, using the algorithm of \cref{lem:offline-convolution}.
	The computation starts right after the arrival of the information on the last index in $\Phi_{a,b}$, which is $(b-1)\cdot 2^{a-1}+1$, takes place during $2^{a-1}$ positions arrivals, and ends before the arrival of position $b\cdot 2^{a-1}+1$, i.e., before the algorithm ends processing position $b\cdot 2^{a-1}$.
	
	After the arrival of $\F|_{\{i\}}$, it is required to output $[\F\otimes\G](i)$.
	By linearity of convolutions and since intervals $\Gamma_a$ induce a partition of $[0\dd m-1]$, we have $[\F\otimes\G](i) = \sum_{a=1}^{M}[\F\otimes\G_a](i)$.
	Observe that  $\{i\}-\Gamma_a\subseteq \Phi_{a,\lfloor{\frac {i}{2^{a-1}}\rfloor}}$
	due to $(\lfloor\frac {i}{2^{a-1}}\rfloor-1)\cdot 2^{a-1}+1=(\lceil\frac {i+1}{2^{a-1}}\rceil-2)\cdot 2^{a-1}+1\ge (\frac {i+1}{2^{a-1}}-2)\cdot 2^{a-1}+1=i-(2^a-2)$
	and $(\lfloor\frac {i}{2^{a-1}}\rfloor-4)\cdot 2^{a-1}+2\le (\frac {i}{2^{a-1}}-4)\cdot 2^{a-1}+2 < i-(2^{a+1}-3)$. 
	Consequently, $[\F\otimes\G](i)=\sum_{a=1}^{M}[\F_{a,\floor{\frac i{2^{a-1}}}}\otimes\G_a](i)$.
	This summation can be performed while the algorithm processes position $i$, because the computation of $\F_{a,\lfloor\frac i{2^{a-1}}\rfloor}\otimes\G_a$ ends before the algorithm ends processing position $\lfloor\frac i{2^{a-1}}\rfloor\cdot 2^{a-1}\le i$.
	
	\paragraph{Complexities.}
	Recall that $\Psi(\mathcal {X,Y})$ is the time required for the computation of all the non-zero entries of  $\mathcal X\otimes \mathcal Y$ in the offline setting.
	The algorithm computes, for each $a\in[1\dd M]$ and $b\in[0\dd \lfloor\frac {m}{2^{a-1}}\rfloor]$, the convolution $\F_{a,b}\otimes \G_b$ in $\Psi(\F_{a,b}, \G_a)$ time, and this work is spread out to $2^{a-1}$ character arrivals, resulting in $\tilde O\left(\frac { \Psi(\F_{b,a},\G_a)}{2^a}\right)$ time per character involved.
	At any moment the algorithm executes for each $a\in[1\dd M]$ at most one process that computes the convolution of $\G_a$ with some $\F_{b,a}$. Consequently, the time cost per index of this computation is $\tilde O\left(\sum_{a=1}^{M} \frac {\max_b \{\Psi(\F_{a,b},\G_b)\}}{2^a}\right)$.
	The total running time is $\Ohtilde(\sum_{a=1}^M \sum_{b=0}^{\lfloor n / 2^{a-1}\rfloor }\Psi(\F_{a,b},\G_b))$.
	
	Below, we provide upper bounds for these expressions using \cref{lem:offline-convolution}.
	Recall that the functions in $\F_{a,b}$ are $f_j|_{\Phi_{a,b}}$ for $1\le j \le t$
	and the functions in $\G_{a}$ are $g_j|_{\Gamma_a}$ for $1\le j \le t$.
	Moreover,  $\diam(\F_{a,b})=\Oh(2^a)$, $\diam(\G_a)=\Oh(2^a)$,
	and $\diam(\F_{a,b}\otimes \G_{a})=\Oh(2^a)$.
	The definition of $\delta$ further yields
	$\|\F_{a,b}\|=\Oh(\delta\cdot 2^a)$ and $\|\G_{a}\|=\Oh(\delta\cdot 2^a)$.
	
	\paragraph{Time per index $\Ohtilde(\sqrt{\delta(\|\F\|+\|\G\|)})$.}
	We have
	$\Psi(\F_{a,b},\G_a)=\Ohtilde((\|\F_{a,b}\|+\|\G_a\|)\sqrt{2^a})=
	\Ohtilde(\min(\|\F\|+\|\G\|,\delta\cdot 2^a)\sqrt{2^a})
	=\Ohtilde(\sqrt{(\|\F\|+\|\G\|)\cdot \delta \cdot 2^a}\sqrt{2^a})
	=\Ohtilde(2^a\sqrt{(\|\F\|+\|\G\|) \delta})
	$ and consequently
	\begin{multline*}\tilde O\left(\sum_{a=1}^{M} \tfrac {\max_b \{\Psi(\F_{a,b},\G_b)\}}{2^a}\right)
	=\Ohtilde\left(\sum_{a=1}^{M} \tfrac {\max_b 2^a\sqrt{(\|\F\|+\|\G\|) \delta}}{2^a}\right)
	=\Ohtilde\left(\sum_{a=1}^{M} \sqrt{(\|\F\|+\|\G\|) \delta}\right)\\
	=\Ohtilde(M\sqrt{(\|\F\|+\|\G\|) \delta})=\Ohtilde(\sqrt{(\|\F\|+\|\G\|) \delta}).\end{multline*}

	\paragraph{Total time $\Ohtilde(n\delta+(\|\F\|+\|\G\|)\sqrt{n})$.}
	We have \[\Psi(\F_{a,b},\G_a)=\Ohtilde\left(\delta\cdot 2^a+\sum_{j=1}^t \min\big(\big|f_j|_{\Phi_{a,b}}\big|\big|g_j|_{\Gamma_a}\big|,2^a\big)\right)\] and consequently
	\begin{align*}
	\sum_{a=1}^M & \sum_{b=0}^{\lfloor n / 2^{a-1}\rfloor }\Psi(\F_{a,b},\G_a) \\ &=
	\Ohtilde\left(\sum_{a=1}^M \sum_{b=0}^{\lfloor n / 2^{a-1}\rfloor }\left(\delta\cdot 2^a +\sum_{j=1}^t \min\big(\big|f_j|_{\Phi_{a,b}}\big|\big|g_j|_{\Gamma_a}\big|,2^a\big)\right)\right) \\
	&= \Ohtilde\left(\sum_{a=1}^M \sum_{b=0}^{\lfloor n / 2^{a-1}\rfloor }\delta\cdot 2^a+\sum_{a=1}^M \sum_{j=1}^t \sum_{b=0}^{\lfloor n / 2^{a-1}\rfloor} \min\big(\big|f_j|_{\Phi_{a,b}}\big|\big|g_j|_{\Gamma_a}\big|,2^a\big)\right) \\
	&= \Ohtilde\left(\sum_{a=1}^M \tfrac{n}{2^a}\cdot \delta \cdot 2^a+\sum_{a=1}^M \sum_{j=1}^t \min\left(\sum_{b=0}^{\lfloor n / 2^{a-1}\rfloor} \big|f_j|_{\Phi_{a,b}}\big|\big|g_j|_{\Gamma_a}\big|,\sum_{b=0}^{\lfloor n / 2^{a-1}\rfloor} 2^a\right)\right) \\
	&= \Ohtilde\left(\sum_{a=1}^M n\delta +\sum_{a=1}^M \sum_{j=1}^t \min\left(\big|f_j\big|\big|g_j|_{\Gamma_a}\big|,\tfrac{n}{2^a}\cdot 2^a\right)\right) \\
	&= \Ohtilde\left(Mn\delta+M \sum_{j=1}^t \min\left(\big|f_j\big|\big|g_j\big|,n\right)\right)\\
	&= \Ohtilde\left(n\delta +\sum_{j=1}^t \sqrt{\big|f_j\big|\big|g_j\big|n}\right)
	= \Ohtilde\left(n\delta + \sum_{j=1}^t \left(\big|f_j\big|+\big|g_j\big|\right)\sqrt{n}\right)\\
	&= \Ohtilde\left(n\delta + \left(\sum_{j=1}^t \big|f_j\big|+\sum_{j=1}^t \big|g_j\big|\right)\sqrt{n}\right)
	= \Ohtilde\left(n\delta + \left(\|\F\|+\|\G\|\right)\sqrt{n}\right).
	\end{align*}
	
	\paragraph{Total time $\Ohtilde(n t)$.}
	We have
	$\Psi(\F_{a,b},\G_a)=\Ohtilde(t \cdot 2^a)$ and consequently
	\begin{multline*}
	\sum_{a=1}^M \sum_{b=0}^{\lfloor n / 2^{a-1}\rfloor }\Psi(\F_{a,b},\G_a) =
	\Ohtilde\left(\sum_{a=1}^M \sum_{b=0}^{\lfloor n / 2^{a-1}\rfloor }t \cdot 2^a\right) =
	\Ohtilde\left(\sum_{a=1}^M \tfrac{n}{2^a}\cdot t \cdot 2^a\right) =
	\Ohtilde\left(\sum_{a=1}^M nt\right)\\ =
	\Ohtilde\left(M nt\right) =
	\Ohtilde\left(nt\right).\qedhere
	\end{multline*}
\end{proof}

Using the algorithm of \cref{lem:black-box} and the reduction from computing Hamming distance to the convolution summation problem defined in \cref{sec:hd-to-csp}, we achieve the following lemma.

\begin{lemma}\label{lem:kmm-periodic-everything-tail}
	Suppose that there exists $\rho\le k$ which is a $d$-period of both $P$ and $T$ for some $d=O(k)$.
	Then, there exists a deterministic online algorithm for the $k$-mismatch problem  that uses $\tilde O(m)$ space and costs $\tilde O\left(n+\min\left(k\sqrt{n}, n\sigma\right)\right)$ total time. Moreover, the worst-case time cost per character is $\tilde O(\sqrt k)$.
\end{lemma}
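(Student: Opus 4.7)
The strategy is to feed the reduction of Section~\ref{sec:hd-to-csp} into the online algorithm of Lemma~\ref{lem:black-box}, with the sequences of backward differences of the characteristic functions playing the roles of $\F$ and $\G$. Concretely, enumerate $\Sigma=\{c_1,\ldots,c_\sigma\}$ and define
\[\F=(\Delta_\rho[T_{c_1}],\ldots,\Delta_\rho[T_{c_\sigma}]),\qquad \G=(\Delta_\rho[P^R_{c_1}],\ldots,\Delta_\rho[P^R_{c_\sigma}]),\]
so that $t=\sigma$. The $d$-periodicity of $P$ and $T$ together with Observation~\ref{obs:period-length-small-norm} yields $\|\F\|=O(k+d+\rho)=O(k)$ and $\|\G\|=O(k)$. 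A single arrival $T[i]$ changes at most two entries of $\F$ at position $i$ (namely $\Delta_\rho[T_{T[i]}](i)$ and $\Delta_\rho[T_{T[i-\rho]}](i)$), and the analogous statement holds for $\G$, so the parameter $\delta$ of Lemma~\ref{lem:black-box} satisfies $\delta=O(1)$.

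Next I would plug everything in online as follows. During preprocessing, construct $\G$ from $P$ and initialize the algorithm of Lemma~\ref{lem:black-box}. Keep two small buffers: the last $\rho$ characters of $T$ (used to compute $\F|_{\{i\}}$ on the fly) and the last $2\rho$ values of $T\otimes P$ (initialized to $0$ by Lemma~\ref{lem:crosscorrelationHam}). When $T[i]$ arrives, update at most two entries in $\F$ at index $i$, submit $\F|_{\{i\}}$ to the procedure of Lemma~\ref{lem:black-box}, receive $[\F\otimes \G](i)=\bigl[\sum_c \Delta_\rho[T_c]\ast\Delta_\rho[P^R_c]\bigr](i)$, and apply Lemma~\ref{lem:crosscorrelation_computation} to compute
\[[T\otimes P](i)=[\F\otimes\G](i)+2[T\otimes P](i-\rho)-[T\otimes P](i-2\rho).\]
By Lemma~\ref{lem:crosscorrelationHam}, for $i\ge m-1$ this gives $\HAM(P,T[i-m+1\dd i])=m-[T\otimes P](i)$, which is reported directly if at most $k$ and as $>k$ otherwise. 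Because Lemma~\ref{lem:black-box} produces the output for index $i$ as soon as $\F|_{\{i\}}$ is submitted, the algorithm has no delay.

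Finally I would verify that the resulting parameters match the claim by substituting into Lemma~\ref{lem:black-box}: the space is $O(\delta m)+O(\rho)=O(m)$, the per-character cost is $\tilde O\bigl(\sqrt{\delta(\|\F\|+\|\G\|)}\bigr)+O(1)=\tilde O(\sqrt k)$, and the total time is $\tilde O\bigl(\min(n\delta+(\|\F\|+\|\G\|)\sqrt n,\,nt)\bigr)=\tilde O\bigl(n+\min(k\sqrt n,n\sigma)\bigr)$. There is no real obstacle beyond bookkeeping: the essential observations are that the common period $\rho$ simultaneously controls $\|\F\|$ and $\|\G\|$ through Observation~\ref{obs:period-length-small-norm}, and that a character arrival touches only $O(1)$ functions, pinning $\delta$ to a constant so that the bounds of Lemma~\ref{lem:black-box} collapse to exactly the target expression.
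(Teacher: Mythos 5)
Your proposal is correct and follows essentially the same route as the paper's proof: the same choice of $\F$ and $\G$ as backward differences of characteristic functions, the same invocation of Observation~\ref{obs:period-length-small-norm} to bound $\|\F\|,\|\G\|=O(k)$ and of the $T[i]\neq T[i-\rho]$ observation to pin $\delta\le 2$, the same buffers for $T$ and $T\otimes P$, and the same plug-in of Lemma~\ref{lem:black-box} combined with Lemma~\ref{lem:crosscorrelation_computation} to recover $[T\otimes P](i)$ online. The complexity bookkeeping also matches the paper's.
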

\begin{proof}
The algorithm maintains a buffer of the last $2\rho$ values of $[T\otimes P]$
and a buffer of the last $\rho$ characters of $T$.
Define sequences of functions $\G=(\Delta_\rho[P^R_{c_1}],\Delta_\rho[P^R_{c_2}],\ldots,\Delta_\rho[P^R_{c_\sigma}])$ and $\F=(\Delta_\rho(T_{c_1}),\Delta_\rho(T_{c_2}),\ldots,\Delta_\rho(T_{c_\sigma}))$,
where $\Sigma =\{c_1,c_2,\ldots,c_\sigma\}$.
The algorithm initializes an instance of the procedure of  \cref{lem:black-box}.
After the arrival of $T[i]$, if $T[i]\ne T[i-\rho]$, then the algorithm sets $\Delta_\rho[T_{T[i]}](i)$ to be $1$ and $\Delta_\rho[T_{T[i-\rho]}](i)$ to be $-1$.
The algorithm transfers these values to the procedure of \cref{lem:black-box}, which responds with the value of $[\F\otimes\G](i)$.
Then, the algorithm reports $[T\otimes P](i)$, which equals $[\F\otimes\G](i)-[T\otimes P](i-2\rho)+2[T\otimes P](i-\rho)$ by \cref{lem:crosscorrelation_computation}. In this expression, the first term is returned by the procedure of \cref{lem:black-box} and the other two terms are retrieved from the buffer.

The update of the text buffer and $\Delta_\rho[T_c]$ for at most two characters $c$ after the arrival of any text character takes $O(1)$ time per character and $O(n)$ time in total.
Note that by \cref{obs:period-length-small-norm} since $\rho\le k$ is a $d$-period of both $P$ and $T$, we have that $\|\G\|\le2(d+k)=O(k)$ and $\|\F\|\le 2 (d+k)=O(k)$.
Moreover, $\delta = \max(\max_i \|\F|_{\{i\}}\|, \max_i \|\G|_{\{i\}}\|)\le 2$.
Consequently, the algorithm of \cref{lem:black-box} uses $\Ohtilde(m)$ space and costs $\Ohtilde(\sqrt{k})$ time per character and
$\Ohtilde(\min(n+k\sqrt{n},n\sigma))$ time total. All the other parts of the algorithm cost $O(1)$ time per character and $\Oh(n)$ time in total.
\end{proof}

Thus, by combining \cref{lem:kmm-periodic-everything-head} and \cref{lem:kmm-periodic-everything-tail} we achieve an algorithm the computes the Hamming distances up to $k$ for every substring of the text without delay, in the case where $\rho$ is a $d$-period of both $P$ and $T$.

\begin{lemma}\label{lem:kmm-periodic-everything}
	Suppose that there exists $\rho \le k$ which is a $d$-period of both $P$ and $T$ for some $d=O(k)$.
	Then, there exists a deterministic streaming algorithm for the $k$-mismatch problem with $n= \frac32m$ that, given an integer parameter $k\le s \le m$, uses $\tilde O(s)$ space, and costs $\tilde O\left(m+\min\left(k^2,\frac {mk}{\sqrt s},\frac {\sigma m^2}{s}\right)\right)$ total time and $\tilde O(\sqrt k)$ time per character in the worst case.
\end{lemma}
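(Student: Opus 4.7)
The plan is to apply the tail partitioning technique described in \cref{sec:preliminaries}, setting $|P_{tail}| = 2s$ and $P_{head} = P[0 \dd m-2s-1]$, so that $P_{head}$ is handled by \cref{lem:kmm-periodic-everything-head} (which tolerates a delay of $2s$ character arrivals) and $P_{tail}$ is handled by \cref{lem:kmm-periodic-everything-tail} (which incurs no delay). The first preparatory step is to observe that the hypotheses of both lemmas transfer to the subpatterns: a $d$-period $\rho$ of $P$ restricts to a $d$-period of every substring of $P$, so in particular both $P_{head}$ and $P_{tail}$ admit $\rho$ as a $d$-period, and the text $T$ is $\rho$-periodic by assumption.

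Next, I would run the two subalgorithms concurrently on the incoming stream. By \cref{lem:kmm-periodic-everything-head} applied to $P_{head}$, when $T[i]$ arrives the algorithm has already produced (or certified as $>k$) the value $\HAM(P_{head}, T[i-m+1 \dd i-2s])$, since the relevant substring ends $2s$ positions earlier, which is exactly the permitted delay. By \cref{lem:kmm-periodic-everything-tail} applied to $P_{tail}$, the value $\HAM(P_{tail}, T[i-2s+1 \dd i])$ is available immediately when $T[i]$ arrives. Summing the two quantities gives $\HAM(P, T[i-m+1 \dd i])$, and if either component reports a value exceeding $k$, then the full Hamming distance also exceeds $k$; this matches the output specification of the tail-partitioning framework.

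For the complexity analysis, each component uses $\tilde{O}(s)$ space: the head directly by \cref{lem:kmm-periodic-everything-head}, and the tail because its pattern has length $|P_{tail}| = 2s$. The worst-case per-character time is $\tilde{O}(\sqrt{k})$ for both components, hence also for the combined algorithm. The head component contributes $\tilde{O}\big(m + \min\big(k^2, \tfrac{mk}{\sqrt s}, \tfrac{\sigma m^2}{s}\big)\big)$ total time, while the tail component, applied to a pattern of length $2s$ over a text of length $n = \tfrac 32 m$, contributes $\tilde{O}(m + \min(k\sqrt{m}, \sigma m))$ by \cref{lem:kmm-periodic-everything-tail}. Since $s \le m$, we have $\tfrac{mk}{\sqrt s} \ge k\sqrt{m}$ and $\tfrac{\sigma m^2}{s} \ge \sigma m$, so the tail contribution is dominated by the head contribution, producing the claimed bound.

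I do not anticipate substantial obstacles: this lemma is essentially a bookkeeping combination of the two preceding results through the tail-partitioning framework. The only points that require care are confirming that the periodicity assumption is inherited by both subpatterns and that the delay of \cref{lem:kmm-periodic-everything-head} is precisely $|P_{tail}|$, so that the head and tail outputs are synchronized and can be summed immediately upon each character arrival.
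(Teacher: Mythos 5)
Your proof follows essentially the same route as the paper: tail-partition with $|P_{tail}|=2s$, apply \cref{lem:kmm-periodic-everything-head} to $P_{head}$ (buffering its output over the $2s$-character delay), apply \cref{lem:kmm-periodic-everything-tail} to $P_{tail}$ with no delay, and sum the two Hamming-distance components after each arrival. The correctness, space, and per-character-time arguments all match.

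There is one small gap in your total-time argument. You bound the tail's total cost by $\tilde{O}(m + \min(k\sqrt{m}, \sigma m))$ and claim that since $s\le m$ gives $k\sqrt m\le \tfrac{mk}{\sqrt s}$ and $\sigma m\le \tfrac{\sigma m^2}{s}$, the tail is dominated by the head's bound $\tilde{O}\bigl(m+\min\bigl(k^2,\tfrac{mk}{\sqrt s},\tfrac{\sigma m^2}{s}\bigr)\bigr)$. But that comparison only handles two of the three branches of the minimum: when $k^2$ is the smallest term, you also need $\min(k\sqrt m, \sigma m)=O(m+k^2)$, which your stated inequalities do not give. The missing step is the elementary observation, used explicitly in the paper, that $k\sqrt m \le \tfrac{k^2+m}{2}$, i.e., $m + k\sqrt m = O(m + k^2)$; with that in hand, all three branches are covered and the tail is indeed absorbed. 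This is a one-line fix, but as written the domination claim is not quite established.
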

\begin{proof}
	Let $P_{tail}$ be the suffix of $P$ of length $2s$, and let $P_{head}$ be the complementary prefix of $P$.
	Note that $\rho$ is a $d$-period of both $P_{head}$ and $P_{tail}$.
	We run the algorithm of \cref{lem:kmm-periodic-everything-head} with $P_{head}$ as a pattern.
	The results of the algorithm of \cref{lem:kmm-periodic-everything-head} are added into a buffer of length $2s$.
	Thus, right after the arrival of $T[i]$, it is guaranteed that $\HAM(P_{head},T[i-|P|+1\dd i-|P_{tail}|])$ is already computed and available for the algorithm.	
	In addition, the algorithm of \cref{lem:kmm-periodic-everything-tail} is executed with $P_{tail}$ as pattern, which computes $\HAM(P_{tail},T[i-|P_{tail}|+1\dd i])$ right after the arrival of $T[i]$, if it is at most $k$.
	After the arrival of $T[i]$, the algorithm reports $\HAM(P,T[i-|P|+1\dd i])=\HAM(P_{head},T[i-|P|+1\dd i-|P_{tail}|])+\HAM(P_{tail},T[i-|P_{tail}|+1\dd i])$.	
	
The time per character of both the algorithm of \cref{lem:kmm-periodic-everything-head} and the algorithm of \cref{lem:kmm-periodic-everything-tail} is $\tilde O(\sqrt k)$.
The total time cost of the algorithm of \cref{lem:kmm-periodic-everything-head} is $\tilde O\left(m+\min\left(k^2,\frac {mk}{\sqrt s},\frac {\sigma m^2}{s}\right)\right)$.
Furthermore, the total time cost of the algorithm of \cref{lem:kmm-periodic-everything-tail} is $\tilde O\left(\min\left(m+k\sqrt{m}, m\sigma\right)\right)=\tilde O\left(m+\min\left(k^2,\tfrac{km}{\sqrt s},\tfrac{\sigma m^2}{s}\right)\right)$ due to $k\le s\le m$ and $m+k\sqrt{m}=\Oh(m+k^2)$.
Hence, the lemma follows.
\end{proof}

\section{Periodic Pattern and Arbitrary Text -- without Delay}\label{sec:periodic-pattern-aribtrary-text}

In this section, we generalize \cref{lem:kmm-periodic-everything} to the case where $\rho$ is not necessarily a $d$-period of $T$, but $\rho$ is still a $d$-period of $P$ and $n=\frac32m$.
We use ideas building upon Clifford et al.~\cite[Lemma 6.2]{CFPSS16} to show that there exists a substring of $T$, denoted $T^*$, such that $\rho$ is a $(2d+4k+\rho)$-period of $T^*$, and $T^*$ contains all of the $k$-mismatch occurrences of $P$ in $T$.
Our construction, specified below, exploits the following property of approximate periods.
\begin{observation}\label{obs:occ-periodicity}
	Let $X$ and $Y$ be two equal length strings. If $\rho$ is a $d$-period of $X$ and $\HAM(X,Y)\le x$ then $\rho$ is a $(d+2x)$-period of $Y$.
\end{observation}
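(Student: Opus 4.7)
The claim is a direct application of the triangle inequality for Hamming distance, combined with the observation that restricting two strings to the same index range can only decrease their Hamming distance. The plan is to unfold the definition of a $d$-period and then bound $\HAM(Y[0\dd n-\rho-1], Y[\rho\dd n-1])$ by interposing the corresponding substrings of $X$.

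Concretely, let $n = |X| = |Y|$. By the definition in \cref{sec:preliminaries}, the hypothesis that $\rho$ is a $d$-period of $X$ means $\HAM(X[0\dd n-\rho-1], X[\rho\dd n-1]) \le d$. I would then apply the triangle inequality for Hamming distance to the three strings $Y[0\dd n-\rho-1]$, $X[0\dd n-\rho-1]$, and then via $X[\rho\dd n-1]$ to $Y[\rho\dd n-1]$, yielding
\[
\HAM(Y[0\dd n-\rho-1], Y[\rho\dd n-1]) \;\le\; \HAM(Y[0\dd n-\rho-1], X[0\dd n-\rho-1]) + \HAM(X[0\dd n-\rho-1], X[\rho\dd n-1]) + \HAM(X[\rho\dd n-1], Y[\rho\dd n-1]).
\]
The middle term is at most $d$ by assumption. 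For each of the two outer terms I would note that $\HAM(X[i\dd j], Y[i\dd j]) \le \HAM(X,Y) \le x$ for any index range $[i\dd j]\subseteq[0\dd n-1]$, because the number of mismatched positions in a subrange is bounded by the total number of mismatched positions. Summing the three bounds gives $d + 2x$, which by definition means $\rho$ is a $(d+2x)$-period of $Y$.

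There is no real obstacle here; the only items to verify are the triangle inequality for Hamming distance (which follows from viewing $\HAM$ as the Hamming metric, or equivalently summing the indicator inequalities $\mathbf 1[a\ne c] \le \mathbf 1[a\ne b] + \mathbf 1[b\ne c]$ position-by-position) and the monotonicity of $\HAM$ under substring restriction. Both are standard and essentially one-line observations, so the whole proof should fit in a short paragraph.
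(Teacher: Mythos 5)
Your proof is correct. The paper states this as an observation without proof, treating it as immediate; your argument—the triangle inequality for Hamming distance applied through $X$, plus the fact that restricting both strings to a common index subrange cannot increase their Hamming distance—is exactly the standard justification one would give and is complete.
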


Let $T_L$ be the longest suffix of $T[0\dd \frac12m-1]$ such that $\rho$ is a $(d+2k)$-period of $T_L$,
and let $T_R$ be the longest prefix of $T[\frac12m\dd n-1]$ such that $\rho$ is a $(d+2k)$-period of $T_R$. Finally, let $T^*$ be the concatenation $T^*=T_L\cdot T_R$.

\begin{lemma}\label{lem:Tstar}
	All the $k$-mismatch occurrences of $P$ in $T$ are contained within $T^*$.
	Moreover, $\rho$ is a $(2d+4k+\rho)$-period of $T^*$.
\end{lemma}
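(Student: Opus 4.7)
\medskip

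\noindent\textbf{Proof plan.} I would prove the two claims separately, relying only on \cref{obs:occ-periodicity} and on the definitions of $T_L$, $T_R$, and $T^*$.

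For the first claim, I would fix an arbitrary $k$-mismatch occurrence $T[i\dd i+m-1]$ of $P$ in $T$, so $\HAM(P,T[i\dd i+m-1])\le k$. Since $\rho$ is a $d$-period of $P$, \cref{obs:occ-periodicity} yields that $\rho$ is a $(d+2k)$-period of $T[i\dd i+m-1]$. Because $n=\tfrac32 m$ and $0\le i\le \tfrac12 m$, the occurrence is split by the cut point $\tfrac12 m$ into a suffix $T[i\dd \tfrac12 m-1]$ of $T[0\dd \tfrac12 m-1]$ and a prefix $T[\tfrac12 m\dd i+m-1]$ of $T[\tfrac12 m\dd n-1]$ (each of these two strings may be empty in the edge cases $i=\tfrac12 m$ or $i=0$). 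Each is a substring of $T[i\dd i+m-1]$, so $\rho$ is also a $(d+2k)$-period of each. By the maximality in the definitions of $T_L$ and $T_R$, this forces the left piece to be a suffix of $T_L$ and the right piece to be a prefix of $T_R$, so the whole occurrence lies within $T^*=T_L\cdot T_R$.

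For the second claim, let $\ell_L=|T_L|$ and write $T^*[0\dd \ell_L+|T_R|-1]$. I would bound $\HAM(T^*[0\dd |T^*|-\rho-1], T^*[\rho\dd |T^*|-1])$ by partitioning the indices $j\in[0\dd |T^*|-\rho-1]$ into three groups: Case A, $j+\rho<\ell_L$, where $T^*[j]=T_L[j]$ and $T^*[j+\rho]=T_L[j+\rho]$, contributing at most $d+2k$ mismatches by the $(d+2k)$-periodicity of $T_L$; Case B, $j\ge \ell_L$, where the comparison reduces to $T_R[j-\ell_L]$ vs.~$T_R[j-\ell_L+\rho]$, contributing at most $d+2k$ mismatches by the $(d+2k)$-periodicity of $T_R$; and Case C, $j<\ell_L\le j+\rho$, which contains at most $\rho$ indices and therefore contributes at most $\rho$ mismatches. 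Summing, the total is at most $2d+4k+\rho$, as claimed.

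Neither step presents a real obstacle; the only mildly delicate point is bookkeeping around the boundary cases of Part 1 (when the occurrence lies entirely in one half, or when $\ell_L<\rho$ or $\ell_R<\rho$ in Part 2). In those degenerate situations the corresponding case among A, B, C simply becomes empty, which only improves the bound, so the argument goes through without change.
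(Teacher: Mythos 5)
Your proof is correct and takes essentially the same approach as the paper: the containment claim rests on \cref{obs:occ-periodicity} combined with the maximality of $T_L$ and $T_R$, and the periodicity of $T^*$ comes from the concatenation structure. The only cosmetic differences are that the paper argues via the leftmost and rightmost $k$-mismatch occurrences rather than an arbitrary one, and the paper leaves the second claim as a one-line remark where you spell out the three-case count explicitly.
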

\begin{proof}
	The second claim follows directly from the fact that $T^*=T_L\cdot T_R$ is a concatenation
	of two strings with $(d+2k)$-period $\rho$ (the extra $\rho$ mismatches might occur at the boundary between $T_L$ and $T_R$). Henceforth, we focus on the first claim.

	We assume that $P$ has at least one $k$-mismatch occurrence in $T$; otherwise, the claim holds trivially.
	Let $T[\ell\dd r]$ be the smallest fragment of $T$ containing all the $k$-mismatch occurrences of $P$ in $T$ (so that the leftmost and the rightmost occurrences starts at positions $\ell$ and $r-m+1$, respectively).
	Our goal is to prove that $T[\ell\dd r]$ is contained within $T^*$.
	
	By \cref{obs:occ-periodicity}, since $T[\ell\dd \ell+m-1]$ is a $k$-mismatch occurrence of $P$ and $\rho$ is a $d$-period of $P$, it must be that $\rho$ is a $(d+2k)$-period of $T[\ell\dd \ell+m-1]$.
	In particular, since $\ell+m \ge \frac12m$, we have that $\rho$ is a $(d+2k)$-period of $T[\ell\dd \tfrac 12m-1]$.
	Hence, by its maximality, $T_L$ must start at position $\ell$ or to the left of $\ell$.
	Similarly, by \cref{obs:occ-periodicity}, since $T[r-m+1\dd r]$ is a $k$-mismatch occurrence of $P$ and $\rho$ is an $d$-period of $P$, it must be that $\rho$ is a $(d+2k)$-period of $T[r-m+1\dd r]$.
	In particular, since $r-m+1 \le n-m \le \frac12m$, we have that $\rho$ is a $(d+2k)$-period of $T[\frac12m\dd r]$.
	Hence, by its maximality, $T_R$ must end at position $r$ or to the right of $r$.
\end{proof}

The algorithm works in two high-level phases.
In the first phase, the algorithm receives $T[0\dd \tfrac 12m-1]$, and the goal is to compute $T_L$.
In the second phase, the algorithm receives $T[\frac 12m\dd n-1]$ and transfers $T^*=T_{L}\cdot T_R$ to the subroutine of \cref{lem:kmm-periodic-everything-head}.
The transfer starts with a delay of $|T_L|$ characters and a standard de-amortization speedup is applied to reduce the delay to $0$ by the time $2|T_L|\le m$ characters are transferred, which is before the subroutine of \cref{lem:kmm-periodic-everything-head} may start producing output.
The algorithm terminates as soon as it reaches the end of $T_R$, i.e., when it encounters
more than $d+2k$ mismatches in $T[\frac 12m\dd n-1]$.

\renewcommand{\L}{\mathsf{L}}
The following \emph{periodic representation} (similar to one by Clifford et al.~\cite{CKP19}) is used for storing substrings of $T$.

\begin{fact}\label{fct:perrep}
For every positive integer $\rho$, there exists an  algorithm that maintains a representation of  $T[\ell\dd r]$ and supports the following operations in $\Oh(1)$ time each:
\begin{enumerate}
\item\label{it:shrink} Change the representation to represent $T[\ell+1\dd  r]$ and return $T[\ell]$.
\item\label{it:grow} Given $T[r+1]$ and $T[r+1-\rho]$, change the representation  to represent $T[\ell\dd r+1]$.
\item\label{it:crop} Given $\ell'\ge \ell$ such that $T[i]=T[i+\rho]$ for $\ell \le i < \ell'$,  change the representation  to represent $T[\ell'\dd  r]$.
\end{enumerate}
If $\rho$ is a $d$-period of $T[\ell\dd r]$ then the space usage is $\Oh(d+\rho)$.
\end{fact}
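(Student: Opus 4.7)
The plan is to represent $T[\ell\dd r]$ using two data structures maintained side by side: a circular buffer $B$ of capacity $\rho$ that stores $T[\ell\dd \min(\ell+\rho-1,r)]$ (together with a start-index pointer identifying which physical slot corresponds to the logical index $\ell$), and a doubly-linked list $L$ sorted by position that stores the \emph{deviations}, i.e., pairs $(i, T[i])$ for every index $i\in[\ell+\rho\dd r]$ satisfying $T[i]\ne T[i-\rho]$. When $\rho$ is a $d$-period of $T[\ell\dd r]$, by the very definition of $d$-period the list $L$ has at most $d$ entries, so the total space is $O(d+\rho)$. I also maintain the current values of $\ell$ and $r$ and head/tail pointers into $L$ for $O(1)$-time access to its endpoints.

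Operation (\ref{it:shrink}) is immediate: return the character at the front of $B$; to keep the window's length-$\rho$ invariant, fetch $T[\ell+\rho]$ either from the head of $L$ (when its position is exactly $\ell+\rho$, in which case pop it, since position $\ell+\rho$ ceases to be a deviation once it enters the new buffer window) or take it equal to the character just returned (otherwise $T[\ell+\rho]=T[\ell]$ by periodicity). Advancing the start pointer of $B$ and writing $T[\ell+\rho]$ into the vacated slot completes the operation in $O(1)$; the short-string edge case $r<\ell+\rho$ is handled by simply shrinking the effective buffer size. Operation (\ref{it:grow}) is symmetric and equally direct: if the buffer is not yet full, append $T[r+1]$ to the back of $B$; otherwise compare the two provided characters $T[r+1]$ and $T[r+1-\rho]$ and, when they differ, push $(r+1, T[r+1])$ onto the back of $L$.

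The only non-obvious operation is (\ref{it:crop}), where the hypothesis $T[i]=T[i+\rho]$ for all $i\in[\ell\dd \ell')$ has two consequences. First, no position in $[\ell+\rho\dd \ell'+\rho-1]\cap[\ell+\rho\dd r]$ is a deviation, so no entries of $L$ need to be removed from its front and the list is unchanged. Second, the new window $T[\ell'\dd \ell'+\rho-1]$ equals a left rotation of the old window $T[\ell\dd \ell+\rho-1]$ by $(\ell'-\ell)\bmod\rho$ positions: for each index $\ell'+k$ that overshoots $\ell+\rho-1$, iterating the hypothesis gives $T[\ell'+k]=T[\ell'+k-j\rho]$ for the appropriate $j\ge 1$ that brings the argument back into $[\ell\dd \ell+\rho-1]$, i.e., into the current buffer. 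A circular buffer realizes such a rotation in $O(1)$ by merely advancing its start pointer by $(\ell'-\ell)\bmod\rho$. The step I expect to be the main obstacle is precisely this rotation argument — in particular checking that the iterated application of the hypothesis remains valid when $\ell'-\ell>\rho$ and that, combined with the absence of deviations in $[\ell+\rho\dd \ell'+\rho-1]$, the internal state of $B$ and $L$ is already correct for the cropped window; once that is established, all remaining bookkeeping is routine.
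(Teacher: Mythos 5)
Your proposal is correct and is essentially the same data structure and argument as the paper's: a length-$\rho$ buffer holding the leading window $T[\ell\dd\ell+\rho-1]$ plus a sorted deviation list, with operation~(\ref{it:crop}) handled by observing that the hypothesis forces both the buffer contents and the list to already be correct for the cropped window. The only cosmetic difference is that the paper indexes the buffer by $i\bmod\rho$ (so operation~(\ref{it:crop}) requires no pointer update at all and is declared ``trivial''), whereas you use a circular buffer with an explicit start pointer advanced by $(\ell'-\ell)\bmod\rho$; the two are interchangeable and your iterated-periodicity justification for the rotation is sound.
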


\begin{proof}
$T[\ell \dd  r]$ is represented by a string $S$ of length $\rho$ such that $T[i]=S[i\bmod \rho]$ for $\ell \le i < \ell +\rho$,
and a list $\L=\{(i,T[i]): \ell+\rho \le i \le r, T[i-\rho]\ne T[i]\}$.
Notice that if $\rho$ is a $d$-period of $T[\ell\dd r]$, then this representation uses $O(d+\rho)$ space.

To implement operation~\eqref{it:shrink}, the algorithm first retrieves $T[\ell]=S[\ell\bmod \rho]$.
The algorithm then checks if the leading element of $\L$ is $(\ell+\rho,T[\ell+\rho])$.
If so, the algorithm removes this pair from $\L$ and sets $S[\ell \bmod \rho]=T[\ell+\rho]$.
To implement operation~\eqref{it:grow}, the algorithm compares $T[r+1]$ with $T[r+1-\rho]$.
If these values are different, then $(r+1, T[r+1])$ is appended to $\L$.
The implementation of operation~\eqref{it:crop} is trivial.
\end{proof}

\begin{lemma}\label{lem:kmm-periodic-pattern}
	Suppose that there exists $\rho\le k$ which is a $d$-period of $P$ for some $d=O(k)$.
	Then, there exists a deterministic streaming algorithm for the $k$-mismatch problem with $n= \frac32m$ that, given an integer parameter $k\le s \le m$, uses $\tilde O(s)$ space, and costs $\tilde O\left(m+\min\left(k^2,\frac {mk}{\sqrt s},\frac {\sigma m^2}{s}\right)\right)$ total time and $\tilde O(\sqrt k)$ time per character in the worst case.
\end{lemma}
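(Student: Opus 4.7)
The plan is to feed the substring $T^*$ from \cref{lem:Tstar} into the algorithm of \cref{lem:kmm-periodic-everything} applied to the pattern $P$. Since $\rho$ is an $\Oh(k)$-period of both $P$ and $T^*$ and $|T^*|\le\tfrac32 m$, that algorithm is directly applicable, and by \cref{lem:Tstar} every $k$-mismatch occurrence of $P$ in $T$ is contained inside $T^*$, so $T$-positions falling outside $T^*$ can be reported as ``$>k$'' with no computation.

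In the first phase, while $T[0\dd\tfrac12 m-1]$ arrives, I would maintain the longest current suffix that has $(d+2k)$-period $\rho$ using the representation of \cref{fct:perrep}: each arrival extends it to the right via the grow operation, and whenever the mismatch count exceeds $d+2k$, I apply the crop operation to skip past the non-mismatch prefix preceding the leftmost stored mismatch and then the shrink operation to discard that mismatch itself, decrementing the count by one. Each mismatch enters and leaves the list at most once during phase~1, so the amortized cost is $O(1)$ per character, comfortably within the $\tilde O(\sqrt k)$ budget after trivial de-amortization. At the end of phase~1 the representation encodes $T_L$ in $O(d+\rho)=O(k)$ space.

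In the second phase, I would run the algorithm of \cref{lem:kmm-periodic-everything} on pattern $P$ and feed it $T^*=T_L\cdot T_R$ at a rate of two characters per arriving character of $T$, pulling first from the $T_L$ representation (via its shrink operation, which yields characters one by one) and then from a second \cref{fct:perrep} instance that buffers the portion of $T_R$ already received but not yet fed. On each arrival $T[\tfrac12 m+j]$, I would first try to grow the $T_R$ representation; if this would push the mismatch count past $d+2k$, then $T_R$ has ended and all subsequent arrivals are answered ``$>k$'' via \cref{lem:Tstar}. Because $\rho$ is a $(d+2k)$-period of $T_R$, the buffer also uses only $O(k)$ space, so the total space remains $\tilde O(s)$.

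The main obstacle is the scheduling: one must verify that the factor-two feed lets outputs land on time. Let $\ell_L$ denote the start of $T_L$ in $T$, so that $\ell_L+|T_L|=\tfrac12 m$. Feeding two characters per arrival exhausts $T_L$ after at most $\tfrac14 m$ phase-2 arrivals, after which the feed becomes synchronous, and $T^*[i]$ is passed to \cref{lem:kmm-periodic-everything} exactly when $T[\ell_L+i]$ arrives, for every $i\ge|T_L|$. The first meaningful output of \cref{lem:kmm-periodic-everything} concerns $T^*$-position $m-1\ge|T_L|$, hence lies in the synchronous regime, and equals $\HAM(P,T[\ell_L+i-m+1\dd\ell_L+i])$ with zero delay. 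The per-character cost is $\tilde O(\sqrt k)$ inherited from \cref{lem:kmm-periodic-everything} plus $O(1)$ bookkeeping, and the total time bound follows from \cref{lem:kmm-periodic-everything} on a text of length $\Oh(m)$.
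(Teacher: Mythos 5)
Your proof is correct and follows essentially the same approach as the paper: phase~1 maintains the longest current suffix with $(d+2k)$-period $\rho$ via \cref{fct:perrep} so that it equals $T_L$ at time $\tfrac12m$; phase~2 drains the buffer at double speed while appending the arriving $T_R$-characters, thereby feeding $T^*=T_L\cdot T_R$ synchronously to the subroutine of \cref{lem:kmm-periodic-everything} by the time its first full alignment (at $T^*$-position $m-1$) is due. The paper uses a single \cref{fct:perrep} buffer that morphs from $T_L$ into the pending $T_R$-portion, whereas you use two separate buffers; both are fine and use $\Oh(k)$ space. One small imprecision in your scheduling claim: you say $T_L$ is exhausted after at most $\tfrac14m$ phase-2 arrivals and that the feed then ``becomes synchronous''; in fact it is the \emph{combined} buffer that empties, after exactly $|T_L|\le\tfrac12m$ arrivals, and only then does the feed drop to one character per arrival. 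This does not affect the conclusion, since the first meaningful output concerns $T^*[m-1]$, and $m-1\ge 2|T_L|-1$ always holds, so that output still lands exactly at the arrival of $T[\ell_L+m-1]$.
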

\begin{proof}
	Based on the pattern $P$ and the period $\rho$, the algorithm initializes an instance $\ALG$ of the algorithm of \cref{lem:kmm-periodic-everything-head}. Then, the algorithm processes $T$ in two phases, while maintaining a buffer of $\rho$ text characters and a periodic representation of a suffix $T'$ of the already processed prefix of $T$.
	
	\paragraph{First Phase.}
	During the first phase, when the algorithm receives $T[0\dd \frac 12m-1]$, the suffix $T'$ is defined as the longest suffix for which $\rho$ is a $(d+2k)$-period.
	Suppose $T'$ is $T[\ell\dd  i-1]$ after processing $T[0\dd  i-1]$.
	The algorithm first appends $T[i]$ to $T'$, extending it to $T[\ell\dd  i]$
	using \cref{fct:perrep}\eqref{it:grow}. If $\rho$ is still a $(d+2k)$-period of $T'$,
	i.e., the list $\L$ has at most $d+2k$ elements, then the algorithm proceeds to the next character.
	Otherwise, $T'$ is first trimmed to $T[\ell'\dd  i]$, where $(\ell'+\rho,T[\ell'+\rho])$ is the first element of $\L$, using \cref{fct:perrep}\eqref{it:crop}, and then to $T[\ell'+1\dd  i]$ using \cref{fct:perrep}\eqref{it:shrink}. The latter operation decrements the size of $\L$ to $d+2k$.
	
	At the end of the first phase, $T'$ is by definition equal to $T_L$.
	The running time of the algorithm in the first phase is $O(1)$ per character,
	and the space complexity is $\Oh(d+k)=\Oh(k)$.
	
	\paragraph{Second Phase.}
	At the second phase, the algorithm receives $T[\frac m2\dd n-1]$,
	while counting the number of mismatches with respect to $\rho$.
	As soon as this number exceeds $d+2k$, which happens immediately after receiving the entire string $T_R$,
	the algorithm stops.
	As long as $T'$ is non-empty, each input character is appended to $T'$ using \cref{fct:perrep}\eqref{it:grow}, and the two leading characters of $T'$ are popped using \cref{fct:perrep}\eqref{it:shrink} and transferred to $\ALG$.
	Once $T'$ becomes empty (which is after $\ALG$ receives $2|T_L|\le m$ characters),
	the input characters are transferred directly to $\ALG$.
	This process guarantees that the input to $\ALG$ is $T^*$ and that $\ALG$ is executed with no delay by the time the first $m$ characters of $T^*$ are passed. By \cref{lem:Tstar}, all $k$-mismatch occurrences of $P$ in $T$ are contained in $T^*$, so all of these occurrences are reported in a timely~manner.

	Since $\rho$ is a $(2d+4k+\rho)$-period of $T^*$ (by \cref{lem:Tstar}) and $T'$ is contained in $T$, then $\rho$ is also a $(2d+4k+\rho)$-period of $T'$ at all times. Consequently, the space complexity is $\Oh(k)$ on top of the space usage of $\ALG$, which is $\Ohtilde(s)$. Thus, in total, the algorithm uses $\tilde O(s)$ space.
	
	The per-character running time is dominated by the time cost of $\ALG$, which is $\tilde O(\sqrt k)$.
	The total running time of the algorithm is also dominated by the total running time of $\ALG$, which, by \cref{lem:kmm-periodic-everything-head}, is $\tilde O\left(m+\min\left(k^2,\frac {mk}{\sqrt s},\frac {\sigma m^2}{s}\right)\right)$.
\end{proof}

The following corollary is obtained from \cref{lem:kmm-periodic-pattern} by the standard trick of splitting the text into $O(\tfrac nm)$ substrings of length $\frac 32m$ with overlaps of length $m$.
\begin{corollary}\label{cor:kmm-periodic-pattern}
	Suppose that there exists $\rho \le k$ which is a $d$-period of $P$.
	Then, there exists a deterministic streaming algorithm for the $k$-mismatch problem that, given an integer parameter $k\le s \le m$, uses $\tilde O(s)$ space and costs $\tilde O\left(n+\min\left(\frac {nk^2}m,\frac{nk}{\sqrt s},\frac{\sigma nm}s\right)\right)$ total time.
	Moreover, the worst-case time cost per character is $\tilde O(\sqrt k)$.
\end{corollary}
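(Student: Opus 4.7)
The plan is to apply the standard overlap-partitioning trick to remove the assumption $n=\frac{3}{2}m$ from \cref{lem:kmm-periodic-pattern}. Concretely, I would cover the text $T$ by $O(\tfrac{n}{m})$ overlapping fragments $T^{(1)}, T^{(2)}, \ldots$, each of length $\frac{3}{2}m$, with consecutive fragments overlapping by exactly $m$ characters (i.e., the fragments start at positions $0,\tfrac{m}{2},m,\tfrac{3m}{2},\ldots$). Since every length-$m$ window of $T$ fits entirely inside at least one such fragment (indeed inside at least one of two consecutive fragments), it suffices to run \cref{lem:kmm-periodic-pattern} separately on each $T^{(j)}$ with the same pattern $P$ and report, at each text position $i\in[m-1\dd n-1]$, the Hamming distance output by any instance whose fragment contains the window $T[i-m+1\dd i]$.

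I would maintain these instances in an online fashion: when the stream reaches the starting position of a new fragment, spawn a fresh instance of the \cref{lem:kmm-periodic-pattern} algorithm; when it reaches the ending position, terminate that instance. Because consecutive fragments overlap on exactly $m$ characters and each fragment has length $\frac{3}{2}m$, at any moment at most a constant number (in fact, at most three) of instances are simultaneously active. To avoid double reporting, each output position $i$ is attributed to the unique fragment $T^{(j)}$ of smallest index that fully contains $T[i-m+1\dd i]$; the other active instances discard their output for that position.

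For complexity, since only $O(1)$ instances are active concurrently, the per-character worst-case running time and the instantaneous space usage are only a constant factor larger than those of a single instance, giving $\tilde O(\sqrt k)$ time per character and $\tilde O(s)$ space. For the total time, each individual instance costs $\tilde O\big(m+\min\big(k^2,\tfrac{mk}{\sqrt s},\tfrac{\sigma m^2}{s}\big)\big)$ by \cref{lem:kmm-periodic-pattern}, and summing over the $O(\tfrac{n}{m})$ fragments yields
\[
\tilde O\!\left(\tfrac{n}{m}\cdot\left(m+\min\!\left(k^2,\tfrac{mk}{\sqrt s},\tfrac{\sigma m^2}{s}\right)\right)\right)
\;=\;\tilde O\!\left(n+\min\!\left(\tfrac{nk^2}{m},\tfrac{nk}{\sqrt s},\tfrac{\sigma nm}{s}\right)\right),
\]
as desired. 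The only subtlety I anticipate is bookkeeping: making sure that (i) the starting characters of a new fragment can be fed to a freshly spawned instance without disturbing the $\tilde O(\sqrt k)$ per-character bound (which is handled by the fact that only $O(1)$ instances are ever processing a character simultaneously), and (ii) that the hypothesis of \cref{lem:kmm-periodic-pattern} — namely, that $\rho$ is a $d$-period of $P$ — is preserved across fragments, which is immediate because $P$ is fixed.
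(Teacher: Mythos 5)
Your proposal is exactly the ``standard trick of splitting the text into $O(\tfrac{n}{m})$ substrings of length $\frac{3}{2}m$ with overlaps of length $m$'' that the paper invokes in one sentence to derive \cref{cor:kmm-periodic-pattern} from \cref{lem:kmm-periodic-pattern}; your fragment placement, constant bound on simultaneously active instances, and complexity summation match the intended argument. The added bookkeeping details (de-duplication of reported positions, spawning/terminating instances) are correct elaborations of what the paper leaves implicit.
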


\section{Aperiodic Pattern and Arbitrary Text}\label{sec:aperiodic-pattern}

We first show how to compute the Hamming distance between $P$ and $T$ with delay of $k$ characters, in the case where $P$ is aperiodic. The algorithm appears in~\cite{GKP18} with small modifications. 
Then, in \cref{sec:aperiodic-pattern-without-delay} we show how to remove the delay using the tail partition technique, described in \cref{sec:preliminaries}.

\subsection{Aperiodic Pattern and Arbitrary text -- with Delay}\label{sec:aperiodic-pattern-delay}

In this section we prove the following lemma, appears in~\cite{GKP18} with small modifications.

\begin{lemma}[{Based on~\cite[Theorem 5]{GKP18}}]\label{lem:kmm-np-head}
	Suppose that the smallest $4k$-period of the pattern $P$ is $\Omega(k)$.
	Then, there exists a randomized streaming algorithm for the $k$-mismatch problem that uses $\tilde O(k)$ space and costs $\tilde O(1)$ time per character.
	The algorithm has delay of $k$ characters and is correct with high probability.
\end{lemma}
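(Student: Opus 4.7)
The plan is to adapt the algorithm of Golan et al.~\cite{GKP18} with only minor modifications. The starting observation is that the aperiodicity assumption makes $k$-mismatch occurrences of $P$ in $T$ sparse: if two such occurrences sat at positions $i$ and $j$ with $j - i = \rho$, then by the triangle inequality applied on the overlapping text substring $T[j\dd i+m-1]$, we would obtain $\HAM(P[0\dd m-\rho-1],\, P[\rho \dd m-1]) \le 2k$, so $\rho$ would be a $4k$-period of $P$, forcing $\rho = \Omega(k)$ by hypothesis. Consequently, any $\Theta(k)$ consecutive text positions host $O(1)$ candidate occurrences, and the $k$-character delay supplies exactly the slack needed to de-amortize verification work.

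Next, following~\cite{GKP18}, I would partition $P$ into $\Theta(k)$ overlapping pieces of length $\Theta(m/k)$ and reduce streaming $k$-mismatch for $P$ to a streaming multi-pattern exact matching problem. Any $k$-mismatch occurrence of $P$ at position $i$ induces exact matches of at least a constant fraction of the pieces at the corresponding offsets, so it suffices to identify, in streaming, positions at which enough pieces match. This multi-pattern filter is implemented using Karp--Rabin fingerprints organized in a hierarchical structure supporting $\tilde O(1)$-time updates per character within $\tilde O(k)$ total space.

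For each candidate position surfaced by the filter I would invoke a $k$-mismatch sketch in the style of Clifford et al.~\cite{CKP19}, which computes $\HAM(P, T[i-m+1 \dd i])$ exactly whenever it is at most $k$, in $\tilde O(k)$ time and $\tilde O(k)$ space. Because at most one candidate arises per $\Omega(k)$ characters, the amortized verification cost is $\tilde O(1)$ per character; spreading each verification over the next $\Theta(k)$ arriving characters, which the $k$-character delay permits, yields $\tilde O(1)$ worst-case time per character as well.

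The main obstacle is the bookkeeping: one must simultaneously surface candidates from the multi-pattern structure as soon as the corresponding alignment of $P$ is fully covered, queue them for sketch verification without letting queues grow (which is precisely where the sparsity and the $k$-character delay combine), and keep aggregate space at $\tilde O(k)$, ruling out storing too many active sketches. Correctness is guaranteed only with high probability because of the randomization in the Karp--Rabin fingerprints and in the $k$-mismatch sketches; a union bound over the $n$ character arrivals handles this.
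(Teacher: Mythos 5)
Your high-level scaffolding — a cheap filter that surfaces candidate alignments, an expensive exact verification invoked only at candidates, and de-amortization of verification over the $\Omega(k)$-gap between candidates enforced by aperiodicity — matches the paper's, and your sparsity derivation via the triangle inequality is correct. However, the two concrete ingredients you plug into this skeleton both have gaps.

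First, the filter. You decompose $P$ into $\Theta(k)$ \emph{substrings} of length $\Theta(m/k)$ and declare a position a candidate when a constant fraction of pieces match exactly. This is a necessary condition for $\HAM\le k$, but it is far from sufficient: a piece of length $\Theta(m/k)$ that fails to match can contribute up to $\Theta(m/k)$ mismatches, so a position passing your filter may have Hamming distance as large as $\Theta(m/\text{const})$. Your sparsity argument controls the spacing of genuine $\le k$-mismatch (or even $\le 2k$-mismatch, via \cref{lem:head_diff}) occurrences, but says nothing about the spacing of positions that merely pass the piece-match test, so the de-amortization budget is not justified. There is also a per-character cost you gloss over: every arriving text character is the final character of one alignment of \emph{each} of the $\Theta(k)$ pieces, so naively updating match counters already costs $\Theta(k)$ per character. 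The paper sidesteps both issues by using \emph{offset} patterns $P_{p,r}$ (arithmetic-progression subsequences, not substrings) for random primes $p\in\Pi$, feeding them to the multi-stream dictionary matcher of \cref{thm:same_length}, and then running an \emph{approximate Hamming distance} estimator on the resulting column strings $P_p,T_p$. \cref{lem:filter_kmm} gives the crucial two-sided guarantee: positions with $\HAM>2k$ are rejected with high probability, so candidates really are $\Omega(k)$ apart.

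Second, the verification. You propose querying a Clifford-et-al.-style $k$-mismatch sketch of $T[i-m+1\dd i]$ on demand for each candidate $i$. But such a sketch of a sliding length-$m$ window cannot be conjured in $\tilde O(k)$ time at position $i$ unless it has been maintained incrementally, and the maintenance itself is the bottleneck the whole reduction is trying to avoid (in \cite{CKP19} the overall per-character cost is $\tilde O(\sqrt k)$, not $\tilde O(1)$). The paper's verification instead stays inside the offset-pattern machinery: for candidates that pass the filter, it identifies \emph{isolated} mismatches by rerunning the offset construction with $\Theta(\log m)$-size primes (\cref{lem:primes_for_one_mismatch}), recovers each isolated mismatch location via the Chinese Remainder Theorem following Porat and Porat, and sums them up in $\tilde O(k)$ time — all of which works from data already produced by the multi-stream matcher, with no sketch of the text window needed. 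You would need to replace both your filter and your verifier with mechanisms that provide these guarantees before the de-amortization argument can close.
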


In the \emph{multi-stream dictionary matching problem}, a set $D=\{P_1,P_2,\ldots, P_d\}$ of length-$m$ patterns is given for preprocessing. In addition, there are $\alpha$ text streams $T_1,T_2,\ldots, T_\alpha$, and the goal is to report every occurrence of a pattern from $D$ in any $T_i$, for $1\le i \le \alpha$, as soon as the occurrence arrives.
An algorithm for the multi-stream dictionary matching problem is allowed to set up a read-only block of \emph{shared memory} during a preprocessing phase, whose contents depend solely on $D$, and $\alpha$ blocks of \emph{stream memory}, one for each text stream, to be used privately for each text stream as the stream is being processed.
The following theorem of Golan et al.~\cite{GKP18} will be useful in this section.
\begin{theorem}[{\cite[{Theorem 1}]{GKP18}}]\label{thm:same_length}
	There exists an algorithm for the multi-stream dictionary matching problem that uses $O(d\log m)$ words of shared memory, $O(\log m \log d)$ words of stream memory, and $O(\log m)$ time per character. All these complexities are in the worst-case, and the algorithm is correct with high probability.
\end{theorem}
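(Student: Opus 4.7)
The proof follows \cite[Theorem 5]{GKP18} with small modifications, using \cref{thm:same_length} as a black box. The overall strategy is filter-and-verify: I would partition $P$ into $b=2k+1$ disjoint contiguous pieces $P_0,\ldots,P_{b-1}$, each of length $\ell=\lfloor m/b\rfloor=\Theta(m/k)$, and feed the dictionary $\{P_0,\ldots,P_{b-1}\}$ together with the single text stream $T$ into \cref{thm:same_length}. This instance uses $O(b\log m)=\tilde O(k)$ shared memory, $O(\log m\log b)=\tilde O(1)$ stream memory, and $O(\log m)=\tilde O(1)$ time per character, and is correct with high probability. Each reported occurrence of some $P_j$ ending at text position $p$ casts a vote for the candidate starting position $i=p-(j+1)\ell+1$ of $P$.

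By pigeonhole, if $T[i\dd i+m-1]$ is a $k$-mismatch occurrence of $P$, then at most $k$ of the $b=2k+1$ pieces can contain a mismatch, so at least $k+1$ pieces match $T$ exactly at their correct offset; hence the dictionary-matching black box detects every $k$-mismatch occurrence and in fact casts at least $k+1$ votes for it. Moreover, the aperiodicity hypothesis implies that $k$-mismatch occurrences of $P$ in $T$ are $\Omega(k)$ apart: if $i<i'$ both yield $k$-mismatch occurrences, then a triangle inequality on Hamming distance applied inside the overlap of the two length-$m$ windows gives $\HAM(P[0\dd m-\rho-1],P[\rho\dd m-1])\le 2k$ for $\rho:=i'-i$, so $\rho$ is a $2k$-period and hence a $4k$-period of $P$; the assumption therefore forces $\rho=\Omega(k)$. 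Consequently, the number of $k$-mismatch occurrences is $O(n/k)$.

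Because the algorithm is allowed a delay of $k$ characters, each real candidate $i$ has $\Omega(k)$ character arrivals between its detection and the deadline for reporting $\HAM(P,T[i\dd i+m-1])$. I would carry out the verification by running a recursive/hierarchical instance of \cref{thm:same_length} on smaller sub-pieces so that exact per-piece Hamming distances become available on arrival; summing them yields $\HAM(P,T[i\dd i+m-1])$, and spreading the $\tilde O(k)$ work per candidate across the $\Omega(k)$ gap gives $\tilde O(1)$ per character overall. Keeping only $O(k)$ active candidates and their vote counters, together with the working memory of \cref{thm:same_length}, fits into $\tilde O(k)$ space.

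The main obstacle is that every piece occurrence in $T$ a priori opens a candidate, and piece occurrences may be far more abundant than full $k$-mismatch occurrences, so positions that accumulate $1\le c\le k$ votes must be handled without exceeding the space and time budgets. This is where the "small modifications" over \cite{GKP18} concentrate: one selects the pieces using the aperiodicity of $P$ to ensure that no $P_j$ has a short period (so its occurrence set in $T$ is sparse), discards each candidate the moment its vote deficit exceeds $k$ (so it can no longer reach the $k+1$ threshold), and schedules the verification work amortized so that the $\tilde O(1)$ per character and $\tilde O(k)$ space bounds hold worst-case per character. The randomized high-probability correctness is inherited directly from \cref{thm:same_length}.
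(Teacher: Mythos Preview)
Your proposal does not address the stated theorem at all. \Cref{thm:same_length} is about the \emph{multi-stream dictionary matching problem}: given a dictionary of $d$ exact patterns and several text streams, report every exact occurrence using $O(d\log m)$ shared words, $O(\log m\log d)$ stream words, and $O(\log m)$ time per character. The paper does not prove this; it is quoted from \cite[Theorem~1]{GKP18} and used only as a black box. Your write-up, by contrast, is a sketch of \cref{lem:kmm-np-head} (the $k$-mismatch algorithm for aperiodic patterns), and you yourself say you are ``using \cref{thm:same_length} as a black box.'' That is circular with respect to the statement you were asked to prove. Nothing in your proposal constructs the shared dictionary data structure, explains why $O(\log m\log d)$ words per stream suffice, or gives a per-character procedure --- the actual content of \cref{thm:same_length} is untouched.

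Even read charitably as an attempt at \cref{lem:kmm-np-head}, the approach diverges from the paper's and has a real gap. The paper does \emph{not} partition $P$ into $2k{+}1$ contiguous blocks with vote counting. Instead, for each prime $p$ in a set $\Pi$ of size $O(\log m)$ with $p=\Theta(k\log^2 m)$, it forms \emph{offset patterns} $P_{p,r}$ and \emph{offset texts} $T_{p,r}$ by taking every $p$th character, runs \cref{thm:same_length} on these to build a column pattern $P_p$ and column text $T_p$, uses a $(1{-}\varepsilon)$-approximation of $\HAM(P_p,T_p[\cdot])$ for filtering (\cref{lem:filter_kmm}), and then locates isolated mismatches exactly via a second layer of small primes and the Chinese Remainder Theorem (\cref{lem:primes_for_one_mismatch}). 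Your ``main obstacle'' paragraph correctly flags the problem with contiguous blocks --- a single short-period piece $P_j$ can occur $\Theta(n)$ times in $T$, flooding the candidate set --- but you do not resolve it; the claim that one can ``select the pieces using the aperiodicity of $P$ to ensure that no $P_j$ has a short period'' is unsupported and in general false (aperiodicity of $P$ says nothing about periods of length-$\Theta(m/k)$ substrings). The paper's prime-offset construction sidesteps this difficulty entirely.
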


The algorithm of \cref{lem:kmm-np-head} has two conceptual phases.
The first phase serves as a filter by establishing for every $q$ whether $\HAM(P,T[q-m+1\dd q])>2k$ or not.
The second phase, which computes the exact Hamming distance, is only guaranteed to work when $\HAM(P,T[q-m+1\dd q])\le 2k$.
Notice, however, that the algorithm must run both phases concurrently since we are only able to establish whether $\HAM(P,T[q-m+1\dd q])>2k$ or not after the $q$th character has arrived, and if we were to begin the computation for the second phase only for locations that pass the filter, we would need access to a large portion of $T[q+m+1\dd q]$ at this time.

\paragraph{Offset texts and patterns.}
Let $p$ be a prime number, which for simplicity is assumed to divide $m$.
Consider the conceptual matrix $M^p=\{m^p_{x,y}\}$ of size ${\frac{m} {p}} \times p$ where $m^p_{x,y} = P[x\cdot p + y]$.
For any integer $0\le r< p$, the $r$th column corresponds to an \emph{offset pattern} $P_{p,r}=P[r]P[r+p]P[r+2p]\cdots P[m-p+r]$.
Notice that some offset patterns with the same prime $p$ but different values of $r$ might be equal.
Let $\Gamma_p = \{P_{p,r}\,\mid \,0\le r<p \}$ be the set of all the offset patterns.
Each unique offset pattern is associated with a unique id; the set of unique ids is denoted by $ID_p$.
The columns of $M^p$ define a \emph{column pattern} $P_p$ of length $p$, where the $i$th character is the unique id of the $i$th column.
We also partition $T$ into $p$ \emph{offset texts}, where for every $0\le r< p$ we define $T_{p,r}=T[r] T[r+p] T[r+2p]\cdots$.

Using the multi-stream dictionary matching of \cref{thm:same_length}, the algorithm finds occurrences of offset patterns from $\Gamma_p$ in each of the offset texts\footnote{Notice that if $p$ does not divide $m$, then we would use two instances of the multi-stream dictionary matching, since all of the patterns in $\Gamma_p$ have length either $\floor{m/p}$ or $\ceil{m/p}$.}.
When the character $T[q]$ arrives, the algorithm passes $T[q]$ to the stream of $T_{p,q\bmod p}$, which is one of the input streams used in the instance of the algorithm of \cref{thm:same_length}.
The algorithm also creates a single streaming \emph{column text} $T_p$ whose characters correspond to the ids of offset patterns as follows.
If one of the offset patterns is found when $T[q]$ is passed to $T_{p,q\modulo p}$, then its unique id is the $q$th character in $T_p$.
Otherwise, the algorithm uses a dummy character for the $q$th character in $T_p$.

Notice that there is an occurrence of $P$ in $T$ after the arrival of the $q$th text character if and only if there is an occurrence of $P_p$ in $T_p$ at that point in time.
Nevertheless, if $\HAM(P,\allowbreak T[q-m+1\dd q]) > 0$, then we cannot guarantee that $\HAM(P_p,T_p[q-p+1\dd q]) = \HAM(P,T[q-m+1\dd q])$ since there could be several mismatches that are mapped to the same offset text.
Following the terminology of Clifford et al.~\cite{CFPSS16}, a mismatch between pattern location $i$ and text location $j$ is said to be \emph{isolated relative to $q$ and $p$} if and only if it is the only mismatch between the pattern offset that contains $P[i]$ and the corresponding offset text that contains $T[j]$.
The idea behind the algorithm is to use several different values of $p$ in a way that guarantees that each mismatch is isolated for at least one choice of $p$.

\subsubsection{Filtering}

The goal of this phase it to establish for every $q$ whether $\HAM(P,T[q-m+1\dd q])>2k$.
The following lemma of Clifford et al.~\cite{CFPSS16} is useful for our algorithm.

\begin{lemma}[{\cite[Lemmas 5.1 and 5.2]{CFPSS16}}]\label{lem:filter_kmm}
	Let $\Pi$ be a uniformly random set of $\log m$ prime numbers between $k\log^2m$ and $34k\log^2 m$.
	If $\HAM(P,T[q-m+1\dd q])\le2k$, then $\HAM(P_p,T_p[q-p+1\dd q])\le \HAM(P,T[q-m+1\dd q])$ for every prime $p\in \Pi$.
	Moreover, if $\HAM(P,T[q-m+1\dd q])>2k$, then $\max_{p\in \Pi} \{\HAM(P_p,T_p[q-p+1\dd q])\} > \frac 5 4 k$ with probability at least $1-\frac 1{4m^2}$.
\end{lemma}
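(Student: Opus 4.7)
The plan is to prove the two claims of the lemma separately: the first inequality follows from a direct combinatorial argument, while the second reduces to bounding pairwise ``collisions'' of mismatch positions modulo a random prime, via a standard Markov-and-union-bound analysis.

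For the first inequality, I would argue that every position $r$ at which $P_p[r]\ne T_p[q-p+1+r]$ certifies that the offset pattern $P_{p,r}$ fails to occur at the aligned location in $T_{p,(q-p+1+r)\bmod p}$, which in turn witnesses at least one character-level mismatch among the $m/p$ aligned pairs $(P[r+ip],\,T[q-m+1+r+ip])$ for $i=0,\ldots,m/p-1$. Because distinct values of $r$ concern disjoint offset classes of the original alignment, these witnessing character mismatches are pairwise disjoint, giving $\HAM(P_p,T_p[q-p+1\dd q])\le\HAM(P,T[q-m+1\dd q])$ by summation.

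For the second inequality, let $S=\{i:P[i]\ne T[q-m+1+i]\}$, so $|S|>2k$ by hypothesis. Call $i\in S$ \emph{$p$-isolated} if no other $j\in S$ satisfies $j\equiv i\pmod p$; every $p$-isolated position contributes a distinct column mismatch, so it suffices to show that for a uniformly random prime $p\in[k\log^2 m,\,34k\log^2 m]$ the number of $p$-isolated positions exceeds $\tfrac54 k$ with constant probability. The key trick is to work with an arbitrary fixed subset $S'\subseteq S$ of size exactly $2k+1$: without this restriction the pair count $\binom{|S|}{2}$ could be as large as $\Theta(m^2)$ and the ensuing union bound would be too weak. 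For each unordered pair $\{i,j\}\subseteq S'$ the collapse event $p\mid |i-j|$ requires $p$ to be a prime divisor of $|i-j|\le m$; since $p\ge k\log^2 m$, the number of such divisors is at most $O(\log m/\log(k\log^2 m))$, while by the prime number theorem the range contains $\Theta(k\log^2 m/\log(k\log^2 m))$ primes, so the per-pair collision probability is $O(1/(k\log m))$. Summing over the $O(k^2)$ pairs yields expected collisions $O(k/\log m)$, and Markov's inequality then gives at most $k/8$ collisions with probability at least $\tfrac12$. Each collision destroys at most two isolated positions, so at least $|S'|-k/4=2k+1-k/4>\tfrac54 k$ positions of $S'$ remain $p$-isolated.

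Finally, treating the $\log m$ primes in $\Pi$ as essentially independent samples (a random subset of an enormous pool differs from $\log m$ independent draws only negligibly), the probability that every sample fails is at most $(\tfrac12)^{\log m}=1/m$; a constant-factor increase in the sample size, or a tighter per-sample success constant via a slightly larger truncation $|S'|$, upgrades this to the stated $1-\tfrac{1}{4m^2}$. The main obstacle is the simultaneous control of (i) the number of prime divisors of $|i-j|$ lying in the range, which uses a crude but sufficient $\log$-divisor bound, and (ii) the need to truncate $S$ to a $\Theta(k)$-size subset so the pair-wise expected collision count does not explode when $|S|$ is close to $m$.
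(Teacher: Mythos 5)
The paper does not prove this lemma; it is cited verbatim from Clifford et al.~\cite{CFPSS16} (their Lemmas~5.1 and~5.2), so there is no in-paper argument to compare against. Your reconstruction is essentially the argument of the cited source, and it is correct. The first inequality holds (in fact unconditionally, without the $\le 2k$ hypothesis) by the disjoint-witness argument you give: a mismatch at column $r$ of $P_p$ against $T_p$ certifies at least one character mismatch in the residue class $r \bmod p$ of the original alignment, and distinct $r$'s give disjoint residue classes. For the second inequality, the three load-bearing observations are exactly the ones you isolate: the number of prime divisors of $|i-j|\le m$ exceeding $k\log^2 m$ is at most $\log m/\log(k\log^2 m)$; the sampling window contains $\Theta\bigl(k\log^2 m/\log(k\log^2 m)\bigr)$ primes, so the per-pair collision probability is $O(1/(k\log m))$; and truncating $S$ to a subset $S'$ of size exactly $2k+1$ is genuinely required to keep the pair count at $O(k^2)$, while still lower-bounding the column mismatch count because $p$-isolated positions of $S'$ occupy distinct residue classes, each of which contributes a column mismatch.

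One place you sell yourself short: you round the per-prime success probability down to $\tfrac12$, but your own Markov step already gives per-prime failure probability $O(1/\log m)$. Carrying that through yields an overall failure probability on the order of $\bigl(O(1/\log m)\bigr)^{\log m}$, which is well below $1/(4m^2)$ with no need to enlarge the sample size or $|S'|$. The without-replacement correction you flag is indeed harmless for the reason you give: removing $O(\log m)$ primes from a pool of $\Theta\bigl(k\log^2 m/\log\log m\bigr)$ changes each conditional failure probability by only a $1+o(1)$ factor, so the power bound survives up to constants.
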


For every prime number $p \in \Pi$, the algorithm creates an instance of the multi-stream dictionary matching algorithm of \cref{thm:same_length} where the dictionary contains all the offset patterns $P_{p,r}$, and the text streams are the offset texts $T_{p,r}$.
When a new character $T[q]$ arrives, the character is passed on to $\log m$ streams (offset texts), one for each prime number.
For every prime $p\in \Pi$, the algorithm computes a $(1-\eps)$ approximation to $\HAM(P_p,T_p[q-p+1\dd q])$, for $\eps<1/5$,  using the black-box technique of~\cite{CEPP11} on the approximate Hamming distance algorithm of either Karloff~\cite{Karloff93} or Kopelowitz and Porat~\cite{KP15, KP18}.
This costs $\tilde{O}(1)$ time per character per prime in $\Pi$.
If $\HAM(P,T[q-m+1\dd q])>2k$, then by \cref{lem:filter_kmm}  the maximum over all primes in $\Pi$ is larger than $k$ with high probability.
For any prime number $p\in \Pi$, the corresponding dictionary contains $\tilde O(k)$ patterns and $\tilde O(k)$ streams, so the total space usage of each multi-stream dictionary matching instance is $\tilde O(k)$.
Summing over $O(\log m)$ prime numbers, the total space usage of this component is still $\tilde O(k)$, and the time per character is $\tilde{O}(1)$.

\subsubsection{Exact Computation}

We assume from now that for every location $q$ that passed the filtering phase, we have $\HAM(P,T[q-m+1\dd q])\le 2k$.
The goal is to compute the exact Hamming distance for such locations.
To do so, we distinguish for each $q$ which mismatches are isolated relative to some $p\in \Pi$ and $q$.
This task is accomplished in two steps: First, we establish for each offset pattern $P_{p,r}$ and corresponding offset text $T_{p,r'}$ at time $q$ that created a mismatch in $P_p$ versus $T_p$ whether the Hamming distance of $P_{p,r}$ and the suffix of $T_{p,r'}$ of length $|P_{p,r}|$  is $0$, $1$, or more.
Secondly, if the Hamming distance is exactly $1$, then we find the location of the only mismatch.

The first step is executed by applying a filter similar to the one used in the first phase, but this time we have $k=1$.
In particular, we make use of the following lemma.
\begin{lemma}\label{lem:primes_for_one_mismatch}
	There exists a set $Q$ of $\Theta(\frac {\log m}{\log\log m})$ prime numbers such that $\prod_{p\in Q}p>m$ and $p=\Theta(\log m)$ for each $p\in Q$
\end{lemma}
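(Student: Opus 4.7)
The plan is to invoke the prime number theorem (PNT) on a window of logarithmic width. I will consider the interval $[\log m, 2\log m]$: by PNT, the number of primes it contains is $\Theta(\log m / \log\log m)$, and every such prime automatically satisfies $p = \Theta(\log m)$.

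Next, I will set $t := \ceil{2\log m / \log\log m}$. For all sufficiently large $m$, the PNT estimate guarantees that $[\log m, 2\log m]$ contains at least $t$ primes, so I will take $Q$ to be any $t$ of them; the finitely many remaining values of $m$ can be dispatched by direct inspection (or by working with a slightly wider constant-factor window $[\log m, c\log m]$). This already gives $|Q| = t = \Theta(\log m / \log\log m)$ and $p = \Theta(\log m)$ for each $p \in Q$.

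For the product bound, I will take logarithms and estimate
\[
\log\Bigl(\prod_{p\in Q} p\Bigr) \;\ge\; t \cdot \log\log m \;\ge\; \frac{2\log m}{\log\log m}\cdot \log\log m \;=\; 2\log m,
\]
which yields $\prod_{p\in Q} p \ge m^2 > m$, as required.

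No serious obstacle arises: the only thing to verify is that the multiplicative constants line up consistently. The factor of $2$ in the definition of $t$ (and in the width of the chosen interval) provides enough slack to simultaneously satisfy the cardinality bound, the magnitude bound, and the product bound. If one prefers a more elementary tool than the full PNT, Chebyshev-style prime-counting bounds (or an iterated application of Bertrand's postulate) would serve equally well and yield identical asymptotics.
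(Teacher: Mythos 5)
Your strategy coincides with the paper's: take all primes from a window of width $\Theta(\log m)$ around $\log m$, use the fact that each prime is at least $\log m$, and bound the product by $(\log m)^{|Q|}$. That part is fine. The gap is in the prime count.

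Your claim ``the PNT estimate guarantees that $[\log m, 2\log m]$ contains at least $t := \lceil 2\log m/\log\log m\rceil$ primes'' is false for \emph{all} large $m$, not merely finitely many exceptions. Writing $L = \log m$, the PNT gives
\[
\pi(2L) - \pi(L) \;\sim\; \frac{2L}{\ln(2L)} - \frac{L}{\ln L} \;\sim\; \frac{L}{\ln L},
\]
because the subtracted term $\pi(L)$ removes half of the naive estimate $\pi(2L) \sim 2L/\ln L$. So the window $[L,2L]$ has roughly $L/\log L$ primes, about half of your $t$, and the shortfall persists asymptotically. (The discrepancy does not go away by switching log bases either.) So ``take any $t$ of them'' is not available, and the product computation, which multiplies by $t$, is resting on a nonexistent supply of primes.

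The fix is cheap and is roughly what your hedge gestures at, but you have to actually carry it out. Either (i) lower $t$ to $\lceil \log m/\log\log m\rceil$, which still gives $\prod_{p\in Q} p \ge (\log m)^{t} \ge m$ with the strict inequality coming from $p > \log m$ for non-integer $\log m$; or (ii) widen the window to $[\log m, c\log m]$ for a comfortably large constant $c$ so that even with quantitative (Chebyshev/Rosser--Schoenfeld) bounds the window provably has more than $\log m/\log\log m$ primes. The paper does (ii), taking $c=5$ and using the explicit estimate $\frac{x}{\ln x} < \pi(x) < 1.256\frac{x}{\ln x}$; note that the soft PNT asymptotic $\pi(x)\sim x/\ln x$ by itself is not enough, since you need a concrete lower bound on $\pi(cL)-\pi(L)$, and the leading-order term with an unquantified $(1+o(1))$ could still dip below your threshold for the $m$ of interest.
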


\begin{proof}
	If $m$ is smaller than some constant, to be defined shortly, then the claim is straightforward by choosing $Q$ to contain only this constant.
	We define $Q$ to be the set of prime numbers in the range $[\ln m, 5\ln m]$. Due to~\cite[Claim 1]{RSL62}, we have that for any $m\ge 17$, the number of prime numbers in the range $[1\dd m]$, denoted by $\pi(m)$, satisfies the inequality $\frac m{\ln m}<\pi(x)<1.256 \frac m{\ln m}$.
	Therefore, 	assuming $m>24154957>e^{17}$, we have that
	\[|Q|>\tfrac{5\ln m}{\ln(5\ln m)}-1.256\tfrac{\ln m}{\ln\ln m} = \tfrac{5\ln m}{\ln5+\ln\ln m}-1.256\tfrac{\ln m}{\ln\ln m}.\]
	Since $\ln\ln m>\ln 17 > \ln 5$, we have
	\[|Q|>\tfrac{5\ln m}{2\ln\ln m}-1.256\tfrac{\ln m}{\ln\ln m}>\tfrac{\ln m}{\ln\ln m}.\]
	Hence, $Q$ is a set of size $\Theta(\frac {\log m}{\log \log m})$ and each $p\in Q$ is of size $\Theta(\log m)$.
	Finally, \[\prod_{p\in Q}p>\prod_{p\in Q} \ln m = (\ln m)^{\frac{\ln m}{\ln\ln m}}=(e^{\ln\ln m})^{\frac{\ln m}{\ln\ln m}}=e^{\ln m}=m.\qedhere\]
\end{proof}

From the properties of $Q$, we are able to complete the first step with high probability; for more details, see~\cite[Lemma 4.3]{CFPSS16}.
The second step is executed by using the techniques of Porat and Porat~\cite[Section 6]{PP09}, which also applies the Chinese Remainder Theorem.

The following fact appeared in~\cite{CFPSS16}, and is useful for the analysis of the time complexity.
\begin{fact}[{Based on~\cite[{Fact 3.1}]{CFPSS16}}]\label{lem:head_diff}
	If $\rho$ is the smallest $d$-period of a pattern $P$, then the $d/2$-mismatch
	occurrences of $P$ in any text $T$ start at least $\rho$ positions apart.
\end{fact}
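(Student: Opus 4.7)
The plan is to prove the fact by contradiction. Suppose that $P$ has two $d/2$-mismatch occurrences in $T$ at positions $i<j$ with $\rho':=j-i<\rho$. I will derive a contradiction by showing that $\rho'$ is a $d$-period of $P$, which violates the minimality of $\rho$.

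The key observation is that the two occurrences overlap on the text fragment $T[i+\rho'\dd i+m-1]$, and this fragment is aligned with two different substrings of $P$: the suffix $P[\rho'\dd m-1]$ (via the first occurrence at position $i$) and the prefix $P[0\dd m-\rho'-1]$ (via the second occurrence at position $j=i+\rho'$). Hence by the triangle inequality for Hamming distance,
\[\HAM(P[0\dd m-\rho'-1],P[\rho'\dd m-1]) \le \HAM(P[0\dd m-\rho'-1],T[i+\rho'\dd i+m-1]) + \HAM(T[i+\rho'\dd i+m-1],P[\rho'\dd m-1]).\]
The first term on the right counts a subset of the mismatches of the second occurrence of $P$ at position $j$, hence is at most $d/2$; the second term counts a subset of the mismatches of the first occurrence of $P$ at position $i$, hence is also at most $d/2$. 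Summing gives $\HAM(P[0\dd m-\rho'-1],P[\rho'\dd m-1])\le d$, i.e., $\rho'$ is a $d$-period of $P$. Since $\rho'<\rho$, this contradicts the assumption that $\rho$ is the smallest $d$-period of $P$.

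Since the whole argument rests on a single triangle-inequality bookkeeping step and the definitions of approximate period and approximate occurrence, I do not anticipate any genuine obstacle; the only care needed is to track the alignment offsets correctly so that the same text fragment $T[i+\rho'\dd i+m-1]$ is exhibited as the common anchor for both $P[0\dd m-\rho'-1]$ and $P[\rho'\dd m-1]$.
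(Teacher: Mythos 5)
Your proof is correct; the paper itself states this fact by citation to Clifford et al.\ without giving its own proof, and your triangle-inequality argument on the overlap $T[i+\rho'\dd i+m-1]$ is the standard (and essentially the only reasonable) way to prove it, with the alignment offsets tracked correctly.
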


\paragraph{Complexities.}
The space usage is dominated by the space usage of all the instances of the multi-stream dictionary matching algorithm, which is $\tilde O(k)$.
The space usage of both components is linear in the number of offset patterns, which is $\tilde O(k)$.
The first component takes $\tilde O(1)$ time per character.
The second components takes $\tilde O(1)$ time per character, and additional $\tilde O(k)$ time per position where we compute the isolated mismatch locations ($\tilde O(1)$ time per location).


By \cref{lem:head_diff}, since the smallest $4k$-period of $P$ is $\rho=\Omega(k)$, the occurrences of the pattern with up to $2k$ mismatches must be $\Omega(k)$ locations apart.
Since the only locations in $T$ whose Hamming distance with $P$ is at most $2k$ are able to pass the filter phase, then the second phase is executed only once every $\Omega(k)$ characters.

Therefore, whenever the first component of the algorithm finds a $2k$ occurrence of $P$, the $\tilde O(k)$ work of the second component, which computes the exact Hamming distance, is de-amortized to the next $k$ character arrivals.
Thus, the running time per character is $\tilde O(1)$.
The delay introduced by the algorithm is at most $k$, completing the proof of \cref{lem:kmm-np-head}.\qed%

\subsection{Aperiodic Pattern and Arbitrary text -- without Delay}\label{sec:aperiodic-pattern-without-delay}
In order to improve the algorithm of \cref{lem:kmm-np-head} to report results without any delay, we use ideas similar to those introduced in \cref{sec:periodic-everything-without-delay}.

\begin{lemma}\label{lem:kmm-np-all}
	Suppose that the smallest $6k$-period of the pattern $P$ is $\Omega(k)$.
	Then, there exists a randomized streaming algorithm for the $k$-mismatch problem that uses $\tilde O(k)$ space and costs $\tilde O(1)$ time per character.
	The algorithm is correct with high probability.
\end{lemma}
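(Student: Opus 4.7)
The plan is to remove the $k$-character delay of \cref{lem:kmm-np-head} using the tail partitioning technique from \cref{sec:preliminaries}, analogously to the way \cref{lem:kmm-periodic-everything} upgrades \cref{lem:kmm-periodic-everything-head}. Set $|P_{tail}| = 2k$ and $|P_{head}| = m - 2k$. First I would verify that $P_{head}$ satisfies the hypothesis of \cref{lem:kmm-np-head}, i.e., that its smallest $4k$-period is $\Omega(k)$. If $\rho$ were a $4k$-period of $P_{head}$, then comparing $P[0 \dd m-\rho-1]$ with $P[\rho \dd m-1]$ splits into the head-part comparison (at most $4k$ mismatches) and a length-$|P_{tail}|=2k$ tail-part comparison (at most $2k$ mismatches), so $\rho$ would be a $6k$-period of $P$. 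By hypothesis any such $\rho$ is $\Omega(k)$, so the smallest $4k$-period of $P_{head}$ is indeed $\Omega(k)$.

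Next I would run the algorithm of \cref{lem:kmm-np-head} on $P_{head}$ against the stream $T$; this uses $\tilde O(k)$ space, $\tilde O(1)$ time per character, and produces $\HAM(P_{head}, T[i-m+1\dd i-2k])$ (capped at $k$) with delay at most $k$. In parallel I keep a sliding buffer of the last $2k$ text characters, an extra $O(k)$ space. By the time $T[i-k]$ has been processed, the head algorithm has completed its output for the window ending at position $i-2k$. If that output is $>k$, the algorithm immediately declares $\HAM(P, T[i-m+1\dd i]) > k$ when $T[i]$ arrives. Otherwise, it schedules the \naive computation of $\HAM(P_{tail}, T[i-2k+1\dd i])$, which costs $O(|P_{tail}|) = O(k)$ total work; this work is spread uniformly over the $k$ arrivals $T[i-k+1\dd i]$, contributing $O(1)$ per character and completing by the time $T[i]$ arrives. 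When $T[i]$ arrives, the algorithm reports the sum $\HAM(P_{head}, T[i-m+1\dd i-2k]) + \HAM(P_{tail}, T[i-2k+1\dd i])$.

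The main obstacle is ensuring that the scheduled \naive tasks do not overlap and blow up the per-character cost. For this I would invoke \cref{lem:head_diff}: since the smallest $4k$-period of $P_{head}$ is $\Omega(k)$, and smaller mismatch budgets only restrict the set of admissible periods, the smallest $2k$-period of $P_{head}$ is also $\Omega(k)$; hence the $k$-mismatch occurrences of $P_{head}$ in any text are $\Omega(k)$ positions apart. Thus the indices $i$ that trigger a \naive tail computation are themselves $\Omega(k)$ apart, so at most $O(1)$ tail tasks are ever simultaneously active, keeping the total per-character cost at $\tilde O(1)$. The space usage is $\tilde O(k)$ for the head algorithm plus $O(k)$ for the text buffer and for any in-progress tail task, and correctness follows with high probability from the correctness of \cref{lem:kmm-np-head}, completing the proof.
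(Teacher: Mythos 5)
Your proposal is correct and takes essentially the same route as the paper: set $|P_{tail}|=2k$, run \cref{lem:kmm-np-head} on $P_{head}$ (whose smallest $4k$-period is $\Omega(k)$ by the same $6k-2k=4k$ budget split), de-amortize the $O(k)$ \naive tail comparison over the $k$ arrivals between the delayed head report and the required output, and invoke \cref{lem:head_diff} to bound the number of concurrently active tail tasks by $O(1)$. If anything, your argument is slightly more careful than the paper's in two spots — you explicitly derive the period bound for $P_{head}$ rather than asserting it, and you apply \cref{lem:head_diff} to $P_{head}$ (whose near-occurrences actually trigger the tail work) rather than to $P$ — but the decomposition, the key lemmas, and the de-amortization are identical.
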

\begin{proof}
	Let $P_{tail}$ be the suffix of $P$ of length $2k$ and let $P_{head}$ be the complementary prefix of $P$. Since the smallest $6k$-period of $P$ is $\Omega(k)$, $|P_{tail}|=2k$, and $6k-2k=4k$, the smallest $4k$-period of $P_{head}$ is also $\Omega(k)$.
	Thus, we execute the procedure of \cref{lem:kmm-np-head} with $P_{head}$.
	Then, whenever the procedure reports  $\HAM(P_{head},T[i-|P|+1\dd i-2k])$ to be at most $k$, the algorithm starts a process that computes $\HAM(P_{tail},T[i-2k+1\dd i])$.
	The procedure of \cref{lem:kmm-np-head} reports $\HAM(P_{head},T[i-|P|+1\dd i-2k])$ before $T[i-k+1]$ arrives.
	Hence, there are still at least $k$ character arrivals until $\HAM(P,T[i-|P|+1\dd i])$ has to be reported.
	During these character arrivals, the computation of $\HAM(P_{tail},T[i-2k+1\dd i])$ is done simply by comparing pairs of characters. The total time of this computation is $O(k)$, and by standard de-amortization, this is $O(1)$ time per character during the arrival of the $k$~characters.
	
	Since the smallest $4k$-period of $P$ is $\Omega(k)$, by \cref{lem:head_diff} we have that any two $k$-mismatch occurrences of $P$ in $T$ are at distance  $\Omega(k)$.
	Therefore, the maximum number of processes computing distances to $P_{tail}$ at any time is $O(1)$.
	Thus, the time cost per character of the algorithm is dominated by the procedure of \cref{lem:kmm-np-head}.
	Furthermore, the total time cost of computing the distances to $P_{tail}$ is $O(n)$ and the overall time cost of the algorithm is dominated by the procedure of \cref{lem:kmm-np-head}.
\end{proof}

\section{Proof of Main Theorem}
We conclude the paper with a proof of \cref{thm:kmm-total}, which is our main result. In the preprocessing, the shortest $6k$-period $\rho$ of the pattern $P$ is determined.
If $\rho\le k$, then the text is processed using \cref{cor:kmm-periodic-pattern}.
This procedure uses $\Ohtilde(s)$ space and costs $\Ohtilde(\sqrt{k})$ time per character and  $\tilde O\left(n+\min\left(\frac {nk^2}m,\frac{nk}{\sqrt s},\frac{\sigma nm}s\right)\right)$ time in total.
Otherwise, the text is processed based on the \cref{lem:kmm-np-all}.
The space complexity in this case is $\Ohtilde(k)=\Ohtilde(s)$, whereas the running time is $\Ohtilde(1)=\Ohtilde(\sqrt{k})$ per character and $\Ohtilde(n)$ in total. This completes the proof of \cref{thm:kmm-total}.\qed

\bibliography{Refs}

\appendix

\end{document}